\begin{document}

\newtheorem{definition}{Definition}
\renewcommand{\algorithmicrequire}{\textbf{Requires}}
\newtheorem{theorem}{Theorem}
\newtheorem{lemma}{Lemma}
\newtheorem{axiom}{Axiom}
\newtheorem{example}{Example}
\newtheorem{corollary}{Corollary}
\newtheorem{property}{Property}

\newcommand{\partitle}[1]{\medskip \noindent \textbf{#1.}}
\newcommand{\subpartitle}[1]{\medskip \emph{#1.}}
\newcommand{\topcaption}{%
\setlength{\abovecaptionskip}{0pt}%
\setlength{\belowcaptionskip}{0pt}%
\caption}

\title{Eclipse: Generalizing kNN and Skyline}

\author{

\IEEEauthorblockN{Jinfei Liu\IEEEauthorrefmark{1}\IEEEauthorrefmark{2}, Li Xiong\IEEEauthorrefmark{1}, Qiuchen Zhang\IEEEauthorrefmark{1}, Jian Pei\IEEEauthorrefmark{3}, Jun Luo\IEEEauthorrefmark{4}}
    \IEEEauthorblockA{\IEEEauthorrefmark{1}Department of Mathematics \& Computer Science, Emory University
    \\\{jinfei.liu, lxiong, qiuchen.zhang\}@emory.edu}
    \IEEEauthorblockA{\IEEEauthorrefmark{2}College of Computing, Georgia Institute of Technology, jinfei.liu@cc.gatech.edu}
    \IEEEauthorblockA{\IEEEauthorrefmark{3}JD.com \& Simon Fraser University, jpei@cs.sfu.ca}
    \IEEEauthorblockA{\IEEEauthorrefmark{4}Machine Intelligence Center, Lenovo \& SIAT, Chinese Academy of Sciences, jun.luo@siat.ac.cn}
}

\maketitle

\begin{abstract}
$k$ nearest neighbor ($k$NN) queries and skyline queries are important operators on multi-dimensional data points. Given a query point, $k$NN query returns the $k$ nearest neighbors based on a scoring function such as a weighted sum of the attributes, which requires predefined attribute weights (or preferences). Skyline query returns all possible nearest neighbors for any monotonic scoring functions without requiring attribute weights but the number of returned points can be prohibitively large. We observe that both $k$NN and skyline are inflexible and cannot be easily customized.

In this paper, we propose a novel \emph{eclipse} operator that generalizes the classic $1$NN and skyline queries and provides a more flexible and customizable query solution for users. In eclipse, users can specify rough and customizable attribute preferences and control the number of returned points. We show that both $1$NN and skyline are instantiations of eclipse. To process eclipse queries, we propose a baseline algorithm with time complexity $O(n^22^{d-1})$, and an improved $O(n\log ^{d-1}n)$ time transformation-based algorithm, where $n$ is the number of points and $d$ is the number of dimensions. Furthermore, we propose a novel index-based algorithm utilizing duality transform with much better efficiency. The experimental results on the real NBA dataset and the synthetic datasets demonstrate the effectiveness of the eclipse operator and the efficiency of our eclipse algorithms.
\end{abstract}

\section{Introduction}\label{sec:Introduction}

Both $k$NN queries and skyline queries are important operators with many applications in computer science.  Given a dataset $P$ of multi-dimensional objects or points, $k$NN returns the $k$ points closest to a query point given a scoring function. One commonly used scoring function is \emph{weighted sum} of the attributes, where the weights indicate the importance of attributes. For a point $p$ in a $d$ dimensional space $p=(p[1],p[2],...,p[d])$, given an attribute weight vector $\textbf{w}=\langle w[1],w[2],...,w[d]\rangle$, the weighted sum $S(p)$ is defined as $S(p)=\sum_{i=1}^dp[i]w[i]$ (assuming the origin is the query point). $k$NN returns the $k$ points with smallest $S(p)$. When $k=1$, we also call it $1$NN query.

One drawback of $1$NN (and $k$NN in general) is its dependence on the exact scoring function. Skyline query is an alternative solution involving multi-criteria decision making without relying on a specific scoring function. Considering the origin as the query point again, skyline consists all Pareto-nearest points that are not dominated by any other point, and a point $p$ dominates another point $p' (p\neq p')$  if it is at least closer to the query point on one dimension and as close on all the other dimensions.

It has been recognized that $1$NN and skyline each has its advantage but also comes with a cost: 1) $1$NN returns the exact nearest neighbor but depends on a predefined attribute weight vector which can be too specific in practice; 2) skyline returns all possible nearest neighbors without requiring any attribute weight vector but the number of returned points can be prohibitively large, in the worst case, the whole data set being returned. It is desirable to have a more flexible and customizable generalization to satisfy users' diverse needs.

\partitle{Motivating example}
Assume a conference organizer needs to recommend a set of hotels for conference participants based on the distance to the conference site and price. Note that we use two dimensional case in our running examples. Figure \ref{fig:1nn} shows a dataset $P=\{p_1,p_2,...,p_4\}$, each representing a hotel with the two attributes. The organizer can use $k$NN queries and specify an attribute weight vector such as $\textbf{w}=\langle 2,1\rangle$ or attribute weight ratio $r=w[1]/w[2]=2$ indicating distance is twice more important than price. In this case, $p_1$ has the smallest score $S(p_1)=8$ and is the nearest neighbor. We can also visualize $S(p)$ in a two dimensional space. If we draw a score line over a point $p$ with slope $-2$, $S(p)$ is essentially its y-intercept. The nearest neighbor is hence $p_1$ which has the smallest y-intercept as shown in Figure \ref{fig:1nn}(b).

\vspace{-1em}
\begin{figure}[htb]
 \centering
 \includegraphics[width=0.35\textwidth]{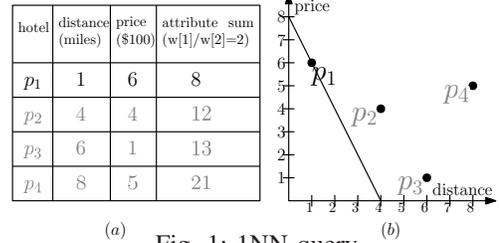}
  \vspace{-1em}
 \caption{$1$NN query.}
 \label{fig:1nn}
\end{figure}
\vspace{-1em}

The organizer can also use skyline queries to retrieve all hotels not dominated by others since the preferences of the participants are unknown. Figure \ref{fig:Skyline} shows the same dataset as in Figure \ref{fig:1nn}.  Given a point $p$, any point lying on the upper right corner of $p$ is dominated by $p$. Hence, $p_1$, $p_2$, and $p_3$ are the skyline points as they are not dominated by any other point.

\vspace{-1em}
\begin{figure}[htb]
 \centering
 \includegraphics[width=0.35\textwidth]{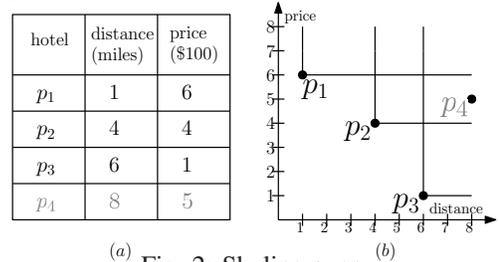}
  \vspace{-1em}
 \caption{Skyline query.}
 \label{fig:Skyline}
\end{figure}
\vspace{-1em}

We can see that both solutions above are inflexible and cannot be customized. The definitions are either too specific to one ranking function (as in $1$NN) or taking no ranking function at all (as in skyline). \emph{If the organizer knows that price is more important than distance to all student participants but the relative importance varies from one student to another, neither $1$NN nor skyline can incorporate this kind of preference ``range''}. One may think that $k$NN provides certain flexibility or relaxation for $1$NN by selecting the top $k$ solutions. However, it only provides flexibility on the ``depth'' in that it selects the next best solutions with respect to the same exact attribute weights. What is desired here is the flexibility on the ``breadth'' of the attribute weights which can capture a user's rough preference among the attributes.

\partitle{Contributions}
In this paper, we propose a novel notion of eclipse\footnote{``eclipse'' comes from solar eclipse and lunar eclipse, we take its notion of ``partial'' to highlight the customizability of our query definition.} that generalizes the classic 1NN and skyline queries while providing a more flexible and customizable query solution for users. We first show a generalized definition of $1$NN and skyline through the notion of domination and then introduce our eclipse definition.

For the $1$NN query, we can say $p$ $1$NN-dominates $p'$, if $S(p)< S(p')$ for a given attribute weight ratio $r=l$ (or $r\in [l,l]$). We assume two dimensional space for example here. For the skyline query, we can easily see that if $p$ dominates (we explicitly say skyline-dominates to differentiate from $1$NN-dominates) $p'$, we have $S(p)\leq S(p')$ for all attribute weight ratio $r\in [0,+\infty)$ given a linear scoring function or any monotonic scoring function. In other words, any point lying on the upper right half of the score line of $p$ (with a flat angle) is $1$NN-dominated by $p$ (Figure \ref{fig:1nn}). Any point lying on the upper right quadrant of $p$ (with a right angle) is skyline-dominated by $p$ (Figure \ref{fig:Skyline}). Both $1$NN and skyline can be defined as those points that are not dominated by any other point. Table \ref{tab:definitions} shows the comparison.

\vspace{-1em}
\begin{table}[htb]\centering
\caption{Definitions.}\label{tab:definitions}
\vspace{-1em}
{%
\begin{tabular}{|c|c|c|}
\hline
                         & domination weight ratio & domination range\\
\hline
$1$NN & $[l,l]$ & flat angle\\
\hline
skyline    & $[0,+\infty)$ & right angle \\
\hline
eclipse & $[l,h]$ & obtuse angle\\
\hline
\end{tabular}}
\end{table}%
\vspace{-1em}

Based on this generalized notion of dominance, we propose the eclipse query. We say $p$ eclipse-dominates $p'$, if $S(p)\leq S(p')$ for all $r \in [l,h]$, where $[l,h]$ is a range for the attribute weight ratio. The eclipse points are those points that are not eclipse-dominated by any other point in $P$. Intuitively, the range of $[l,h]$ for the attribute weight ratio allows users to have a flexible and approximate preference of the relative importance of the attributes rather than an exact value $l$ (as in $1$NN) or an infinite interval $[0,+\infty)$ (as in skyline). As a result, eclipse combines the best of both $1$NN and skyline and returns a subset of points from skyline which are possible nearest neighbors for all scoring functions with attribute weight ratio in the given range. We can easily see that both $1$NN and skyline are instantiations of eclipse queries.

Recall our running example. If the conference organizers want to incorporate the preference that price is more important than distance for all student participants, they can set the attribute weight ratio as $r\in [0,1)$. For practical usage, in order to reduce the burden of parameter selection for users, we envision that users can either specify an attribute weight vector as in $k$NN which can be relaxed into ranges with a margin, or specify the relative importance of the attributes in categorical values such as very important, important, similar, unimportant, very unimportant, which correspond to predefined attribute weight ranges.

\vspace{-1em}
\begin{figure}[htb]
 \centering
 \includegraphics[width=0.35\textwidth]{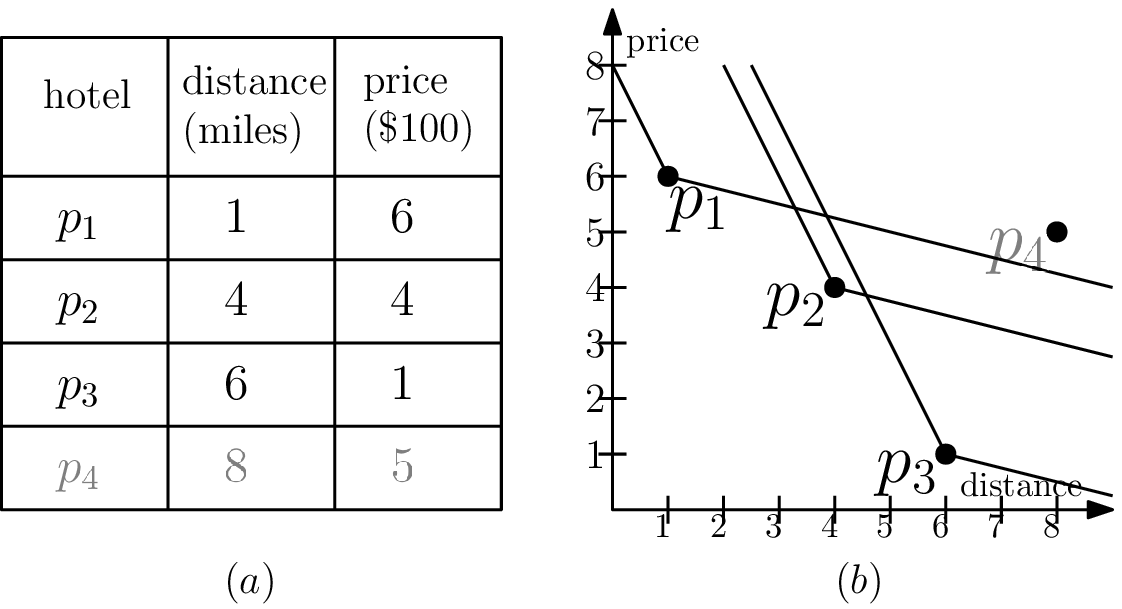}
  \vspace{-1em}
 \caption{Eclipse query.}
 \label{fig:eclipseQ}
\end{figure}
\vspace{-1em}

Figure \ref{fig:eclipseQ} shows another example of eclipse query with an attribute weight ratio range $r \in [1/4,2]$ which indicates distance is relatively comparable to price. For a point $p$, we can draw two score lines with slopes $-h$ and $-l$, respectively. A point eclipse-dominates other points in its upper right range (with an obtuse angle) between the two score lines. We can see that $p_4$ is eclipse-dominated by $p_1,p_2,p_3$. The eclipse query returns $p_1$, $p_2$, and $p_3$ as they cannot eclipse-dominate each other. Please note that $p_1$ cannot skyline-dominate $p_4$ in skyline, but $p_1$ can eclipse-dominate $p_4$ in eclipse. Therefore, eclipse generally returns fewer points than skyline as the domination range is larger than skyline.

It is non-trivial to compute eclipse points efficiently. To determine if point $p$ eclipse-dominates $p'$, we need to check if $S(p)\leq S(p')$ for all $r \in [l,h]$ but there are an infinite number of testings if we do it one by one. We first prove that we only need to check the boundary values $l$ and $h$ rather than the entire range. Given the boundary values, a straightforward algorithm to determine the eclipse-dominance relationship for each pair of points leads to $O(n^22^{d-1})$ time complexity. We propose an algorithm by transforming the eclipse problem to the skyline problem, which leads to much better $O(n\log ^{d-1}n)$ time complexity. In addition, we propose a novel index-based algorithm utilizing duality transform with further improved efficiency. The main idea is to build two index structures, Order Vector Index and Intersection Index, to allow us to quickly compute the dominance relationships between the points based on a given attribute weight ratio range. To implement Intersection Index in high dimensional space, we propose line quadtree and cutting tree with different tradeoffs in terms of average case and worst case performance.

We briefly summarize our contributions as follows.
\vspace{-0.7em}
\begin{itemize}
\item We propose a novel operator \emph{eclipse} that generalizes the classic $1$NN and skyline queries while providing a more flexible and customizable query solution for users.  We formally show its properties and its relationship with other related notions including $1$NN, convex hull, and skyline.  We show that $1$NN and skyline are the special cases of eclipse.\vspace{-0.7em}

\item We present an efficient $O(n\log ^{d-1}n)$ time transformation-based algorithm for computing eclipse points by transforming the eclipse problem to the skyline problem.\vspace{-0.7em}

\item We present an efficient index-based algorithm by utilizing index structures and duality transform with detailed complexity analysis shown in Section \ref{sec:index-based}.\vspace{-0.7em}

\item We conduct comprehensive experiments on the real and synthetic datasets. The experimental results show that eclipse is interesting and useful, and our proposed algorithms are efficient and scalable.
\end{itemize}
\vspace{-1em}

\partitle{Organization} The rest of the paper is organized as follows. Section \ref{sec:definition} introduces the eclipse definition, eclipse properties, and the relationship between the eclipse query and other queries. We present the transformation-based algorithms for computing eclipse points in Section \ref{sec:transformation-based}, and the index-based algorithms in Section \ref{sec:index-based}. We report the experimental results and findings in Section \ref{sec:Experiments}. Section \ref{sec:Related} presents the related work. Section \ref{sec:Conclusion} concludes the paper.


\section{Definitions and Properties}\label{sec:definition}

In this section, we first show some preliminaries and then give the formal definition of eclipse as well as a few properties of eclipse. Finally, we show the relationship between the eclipse query and others queries. For reference, a summary of notations is given in Table \ref{tab:notations}.

\vspace{-2em}
\begin{table}[htb]\centering
\caption{The summary of notations.}\tiny\label{tab:notations}
\vspace{-1em}
{%
\footnotesize
\begin{tabular}{|c|c|}
\hline
Notation & Definition\\
\hline
$p_i$ & $i^{th}$ point in dataset $P$\\
\hline
$p_i[j]$ & $j^{th}$ dimension of point $p_i$\\
\hline
$p\prec_s p'$ & p skyline-dominates p'\\
\hline
$p\prec_e p'$ & p eclipse-dominates p'\\
\hline
$p\prec_1 p'$ & p $1$NN-dominates p'\\
\hline
$n$ & number of points\\
\hline
$u$ & number of skyline points\\
\hline
$d$ & number of dimensions\\
\hline
$w[j]$ & $j^{th}$ attribute weight\\
\hline
$\textbf{w}=\langle w[1],...,w[d]\rangle$ & attribute weight vector\\
\hline
$r[j]=w[j]/w[d]$ & $j^{th}$ attribute weight ratio\\
\hline
$\textbf{r}=\langle r[1],...,r[d-1]\rangle$ & attribute weight ratio vector\\
\hline
$S(p)$ & weighted sum of point $p$\\
\hline
$S(p)_{\textbf{r}}$ & weighted sum of point $p$ for $\textbf{r}$ \\
\hline
$c_i$ & mapped corresponding point of $p_i$\\
\hline
$c_i[j]$ & $c_i$ on the $j^{th}$ dimension\\
\hline
\end{tabular}}
\end{table}%
\vspace{-1em}

\subsection{Preliminaries}

In the traditional $1$NN definition, $1$NN returns the point closest to a query point according to a given attribute weight vector. In the traditional skyline definition \cite{DBLP:journals/pvldb/LiuXPLZ15}, skyline returns all points that are not dominated by any other point. We can generalize the traditional $1$NN and skyline definitions using the notion of domination. For $1$NN, we say $p$ $1$NN-dominates $p'$, if $S(p)<S(p')$ for a given attribute weight vector. For skyline, we say $p$ skyline-dominates $p'$, if $S(p)\leq S(p')$ for all attribute weight ratio $r\in [0,+\infty)$ given a linear scoring function. We provide the generalized definitions for $1$NN and skyline as follows.

\begin{definition}(\textbf{$1$NN}).
Given a dataset $P$ of $n$ points in $d$ dimensional space and an attribute weight vector $\textbf{w}=\langle w[1],w[2],...,w[d]\rangle$ or an attribute weight ratio vector $\textbf{r}=\langle r[1],r[2],...,r[d-1]\rangle$, where $r[j]=w[j]/w[d]$. Let $p=(p[1],p[2],...,p[d])$ and $p'=(p'[1], p'[2],..., p'[d])$ be two different points in $P$, we say $p$ $1$NN-dominates $p'$, denoted by $p\prec_1 p'$, if $S(p)< S(p')$ for $\boldsymbol{r[j]\in [l_j,l_j]}$, where $l_j$ is a user-specified value for the $j^{th}$ attribute weight ratio,  $S(p)=\sum_{i=1}^dp[i]w[i]$ is the weighted sum\footnote{In this paper, for the purpose of simplicity, we focus on $L_1$ norm. However, we note that the algorithms for $L_1$ norm can be easily extended to $L_p$ norm ($L_p^W(p_i)=(\sum_{i=1}^dw[i]p_i^p)^{1/p}$) for $p\geq2$. The reasons are 1) factor $1/p$ in the exponent part cannot affect the ranking of each point and 2) there is no difference to compute $w[i]p_i^p$ for different $p$.} of $p$, $l_j\in [0,+\infty)$, and $j=1,2,...,d-1$. The $1$NN point is the point that is not dominated by any other point in $P$.
\end{definition}

\begin{definition}(\textbf{Skyline}).
Given a dataset $P$ of $n$ points in $d$ dimensional space and an attribute weight vector $\textbf{w}=\langle w[1],w[2],...,w[d]\rangle$ or an attribute weight ratio vector $\textbf{r}=\langle r[1],r[2],...,r[d-1]\rangle$, where $r[j]=w[j]/w[d]$. Let $p=(p[1],p[2],...,p[d])$ and $p'=(p'[1], p'[2],..., p'[d])$ be two different points in $P$, we say $p$ dominates $p'$, denoted by $p\prec_s p'$, if $S(p)\leq S(p')$ for all $\boldsymbol{r[j]\in [0,+\infty)}$, where $j=1,2,...,d-1$. The skyline points are those points that are not dominated by any other point in $P$.
\end{definition}

\subsection{Eclipse Definition and Properties}

We define a new eclipse query below which allows a user to define an attribute weight ratio range for the domination.

\begin{definition}(\textbf{Eclipse}).
Given a dataset $P$ of $n$ points in $d$ dimensional space and an attribute weight vector $\textbf{w}=\langle w[1],w[2],...,w[d]\rangle$ or an attribute weight ratio vector $\textbf{r}=\langle r[1],r[2],...,r[d-1]\rangle$, where $r[j]=w[j]/w[d]$. Let $p=(p[1],p[2],...,p[d])$ and $p'=(p'[1], p'[2],..., p'[d])$ be two different points in $P$, we say $p$ eclipse-dominates $p'$, denoted by $p\prec_e p'$, if $S(p)\leq S(p')$ for all $\boldsymbol{r[j]\in [l_{j},h_{j}]}$, where $[l_j,h_j]$ is a user-specified range for the $j^{th}$ attribute weight ratio. The eclipse points are those points that are not dominated by any other point in $P$.
\end{definition}

\begin{example}
In the $1$NN query of Figure \ref{fig:1nn}, given the attribute weight ratio vector $\textbf{r}= \langle 2\rangle$, $p_1$ dominates the points in the flat angle range, i.e., $p_2$, $p_3$, and $p_4$. In the skyline query of Figure \ref{fig:Skyline}, $p_1$ dominates the points in the right angle range, and $p_1$ cannot dominate any point in this example. In the eclipse query of Figure \ref{fig:eclipseQ}, given the attribute weight ratio vector $\textbf{r}=\langle r\rangle$, where $r\in [1/4,2]$, $p_1$ dominates the points in the obtuse angle range, i.e., $p_4$.
\end{example}

For each point, we call the range that it can dominate as \emph{domination range}, the boundary line as \emph{domination line} (\emph{domination hyperplane} in high dimensional space), and the attribute weight ratio vector that determines the domination line plus $r[d]=w[d]/w[d]=1$ as \emph{domination vector}. For example, for point $p_1$ in Figure \ref{fig:eclipseQ}, given the attribute weight ratio range $r\in [1/4,2]$, the domination lines are $y=-2x+8$ and $y=-1/4x+6.25$, the domination vectors are $\langle 2,1 \rangle$ and $\langle 1/4,1 \rangle$, and the domination range is the obtuse angle range on the upper right corner of the domination lines.

We show several properties of eclipse queries below.

\begin{property}(\textbf{Asymmetry}).\label{prop:asy}
Given two points $p$ and $p'$, if $p\prec _ep'$, then $p'\nprec _ep$.
\end{property}

\begin{proof}
Because $p\prec_ep'$, for any attribute weight ratio vector $\textbf{r}=\langle r[1],r[2],...,r[d-1]\rangle$, where $r[j]\in [l_{j},h_{j}]$ for j=1,2,...,d-1, we have $S(p)_{\textbf{r}}\leq S(p')_{\textbf{r}}$, where $S(p)_{\textbf{r}}$ is the weighted sum for $\textbf{r}$. Therefore, $p'\nprec _ep$.
\end{proof}

\begin{property}(\textbf{Transitivity}).\label{prop:tran}
Given three points $p_1$, $p_2$, and $p_3$, if $p_1\prec _ep_2$ and $p_2\prec _ep_3$, then $p_1\prec _ep_3$.
\end{property}

\begin{proof}
If we have $p_1\prec _ep_2$ and $p_2\prec _ep_3$, for any attribute weight ratio vector $\textbf{r}=\langle r[1],r[2],...,r[d-1]\rangle$, where $r[j]\in [l_{j},h_{j}]$ for j=1,2,...,d-1, we have $S(p_1)_{\textbf{r}}\leq S(p_2)_{\textbf{r}}$ and $S(p_2)_{\textbf{r}}\leq S(p_3)_{\textbf{r}}$. Therefore, we have $S(p_1)_{\textbf{r}}\leq S(p_3)_{\textbf{r}}$, that is $p_1\prec _ep_3$.
\end{proof}

Furthermore, we show the dominance definition in skyline is stricter than the dominance definition in eclipse by the following two properties.

\begin{property}\label{pro:1}
If $p\prec_s p'$, then $p\prec_e p'$, vice is not.
\end{property}
\vspace{-0.5em}
\begin{property}\label{pro:2}
If $p\nprec_s p'$, it is possible that $p\prec_e p'$.
\end{property}


\subsection{Relationship with other Definitions}
In this subsection, we discuss the relationship between the eclipse query and other classic queries, i.e., $1$NN, convex hull, and skyline. We note that the convex hull query returns the points from origin's view rather than the entire traditional convex hull. For example, in Figure \ref{fig:1nn}, the convex hull query returns $p_1,p_3$ rather than $p_1,p_3,p_4$.

The relationship among $1$NN, convex hull, eclipse, and skyline is shown in Figure \ref{fig:relationship}. $1$NN returns the best one point given a linear scoring function with specific weight for each attribute. Convex hull returns the best points given any linear scoring functions, so convex hull contains all possible $1$NN points. Skyline returns the best points given any monotone scoring functions. Eclipse returns the best points given a linear scoring function with a weight range for each attribute. As a result, skyline is the superset of eclipse and convex hull, $1$NN contains a point that belongs to the result set of all other queries. Depending on the range, eclipse ($[l,h]$) can be instantiated to be $1$NN ($[l,l]$) or skyline ($[0,+\infty)$). Therefore, eclipse not only contains some points that belong to convex hull but also some points that do not belong to convex hull.

\begin{figure}[htb]
 \centering
 \includegraphics[width=0.25\textwidth]{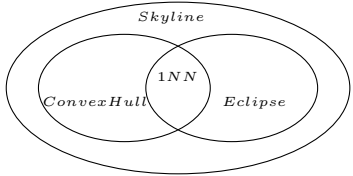}
 \caption{Relationship between eclipse and other definitions.}
 \label{fig:relationship}
\end{figure}
\vspace{-1em}

\section{Transformation-based Algorithms}\label{sec:transformation-based}

In this section, we first show a baseline algorithm in Subsection \ref{subsec:baseline} and then show an improved algorithm by transforming the eclipse problem to the skyline problem in two dimensional space in Subsection \ref{subsec:transformation-based2D} and high dimensional space in Subsection \ref{subsec:transformation-basedhighD}.

\subsection{Baseline Algorithm}\label{subsec:baseline}
In order to check the dominance between a point $p$ and other points in two dimensional space, we observe that instead of computing all the continuous values in the range $[l_j,h_j]$ for $S(p)$, we only need to compute the boundary values of the range for $S(p)$. We note that although \cite{DBLP:journals/pvldb/CiacciaM17} presented a similar algorithm, they did not give any proof for the correctness.

\begin{theorem}\label{the:endpoint2D}
Given an attribute weight vector $\textbf{r}=\langle r\rangle$, where $r\in [l,h]$, if $S(p)_{\textbf{r}}\leq S(p')_{\textbf{r}}$ for $r=l$ and $r=h$, we have $S(p)_{\textbf{r}}\leq S(p')_{\textbf{r}}$ for all $r\in [l,h]$, where $S(p)_{\textbf{r}}$ is the weighted sum of point $p$ for $\textbf{r}$.
\end{theorem}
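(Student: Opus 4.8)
The plan is to exploit the fact that in two dimensions, for a fixed point $p$, the weighted sum $S(p)_{\textbf{r}}$ is an affine (in fact linear) function of the single ratio parameter $r$. Writing $\textbf{r}=\langle r\rangle$ and using $S(p)_{\textbf{r}} = p[1]\cdot r + p[2]$ (after normalizing $w[2]=1$, which is exactly the ratio convention $r[j]=w[j]/w[d]$ used in the paper), the quantity $S(p)_{\textbf{r}}$ is a linear function of $r$. Hence the difference $g(r) := S(p')_{\textbf{r}} - S(p)_{\textbf{r}} = (p'[1]-p[1])\, r + (p'[2]-p[2])$ is also linear — i.e., affine — in $r$.

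First I would observe that an affine function of one real variable is determined on any interval $[l,h]$ by its values at the two endpoints: for any $r\in[l,h]$ we can write $r = \lambda l + (1-\lambda) h$ with $\lambda = \frac{h-r}{h-l}\in[0,1]$, and by linearity $g(r) = \lambda\, g(l) + (1-\lambda)\, g(h)$. Then, if the hypothesis gives $g(l)\ge 0$ and $g(h)\ge 0$ (which is precisely $S(p)_{\textbf{r}}\le S(p')_{\textbf{r}}$ at $r=l$ and $r=h$), the convex combination $\lambda g(l)+(1-\lambda)g(h)$ is a nonnegative combination of nonnegative numbers, hence $\ge 0$. Therefore $g(r)\ge 0$, i.e. $S(p)_{\textbf{r}}\le S(p')_{\textbf{r}}$, for every $r\in[l,h]$, which is the claim.

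The key steps, in order, are: (1) expand $S(p)_{\textbf{r}}$ explicitly as a linear function of $r$ using the ratio normalization; (2) form the difference $g(r)$ and note it is affine in $r$; (3) write an arbitrary $r\in[l,h]$ as a convex combination of $l$ and $h$; (4) apply linearity and the endpoint hypotheses to conclude $g(r)\ge 0$. There is essentially no hard obstacle here — the content is the elementary fact that affine functions attain their extrema on an interval at the endpoints; the only thing to be careful about is the edge case $l=h$ (degenerate interval, trivially fine) and making the normalization of $w[d]$ explicit so that the ``weighted sum'' really is linear in the scalar $r$. I would also remark that the same convexity argument extends verbatim to the high-dimensional case where $S(p)_{\textbf{r}}$ is affine (multilinear, hence affine in each coordinate, and on a box its extrema occur at vertices) — but for this two-dimensional theorem the single-variable convex-combination argument suffices.
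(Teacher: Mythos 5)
Your proof is correct and follows essentially the same route as the paper: both reduce the claim to the fact that the difference of the two scores is an affine function of $r$ whose sign on $[l,h]$ is controlled by its values at the endpoints. Your version is slightly more explicit, since you actually write $r=\lambda l+(1-\lambda)h$ and expand $g(r)=\lambda g(l)+(1-\lambda)g(h)$, whereas the paper simply asserts that a linear function nonpositive at both endpoints is nonpositive in between.
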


\begin{proof}
Because for $r=l$ and $r=h$, $S(p)_{\textbf{r}}\leq S(p')_{\textbf{r}}$, we have $lp[1]+p[2]\leq lp'[1]+p'[2]$ and $hp[1]+p[2]\leq hp'[1]+p'[2]$. That is $l(p[1]-p'[1])+(p[2]-p'[2])\leq 0$ and  $h(p[1]-p'[1])+(p[2]-p'[2])\leq 0$. Assume we have a linear function $f(t)=t(p[1]-p'[1])+(p[2]-p'[2])$ where $l< t< h$. Then we have $f(l)\leq 0$ and $f(h)\leq 0$. Because $f(t)$ is a line, therefore, for any value $t$ between the boundary values $l$ and $h$, we have $f(t)\leq 0$ for $l< t< h$.
\end{proof}

\begin{example}
Given $r\in [1/4,2]$, the dominance relationship is shown in Figure \ref{fig:eclipseQ}, we only need to determine if $S(p)\leq S(p')$ for $r=1/4$ and $r=2$ according to Theorem \ref{the:endpoint2D}. We take $p_2$ and $p_4$ as an example, $S(p_2)_{\langle 1/4\rangle}=1/4\times 4+4=5$. Similarly, we have $S(p_2)_{\langle 2\rangle}=12$, $S(p_4)_{\langle 1/4\rangle}=7$, and $S(p_4)_{\langle 2\rangle}=21$. Because $S(p_2)_{\langle 1/4\rangle}<S(p_4)_{\langle 1/4\rangle}$ and $S(p_2)_{\langle 2\rangle}<S(p_4)_{\langle 2\rangle}$, we have $p_2\prec_e p_4$.
\end{example}

Next, we show how to extend Theorem \ref{the:endpoint2D} from two dimensional space to high dimensional space.

\begin{theorem}\label{the:endpointhighD}
Given an attribute weight ratio vector $\textbf{r}=\langle r[1], ...,r[d-1]\rangle$, where $r[j]\in [l_j, h_j]$ for $j=1, 2, ..., d-1$, if $S(p)_\textbf{r}\leq S(p')_\textbf{r}$ for $r[j]=l_j$ and $r[j]=h_j$, where $j=1, 2, ...,d-1$, we have $S(p)_\textbf{r}\leq S(p')_\textbf{r}$ for all $r[j]\in [l_j,h_j]$.
\end{theorem}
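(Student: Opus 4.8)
The plan is to strip the statement down to a convexity fact about affine functions on a box, and then either close it with a one-line convex-combination argument or, to stay in the style of Theorem~\ref{the:endpoint2D}, run an induction on the number of ratio coordinates.

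First I would normalize. Since $w[d]>0$, dividing $S(p)_{\textbf{r}}\le S(p')_{\textbf{r}}$ by $w[d]$ preserves the inequality and turns it into $\sum_{j=1}^{d-1}r[j]\,p[j]+p[d]\le \sum_{j=1}^{d-1}r[j]\,p'[j]+p'[d]$, i.e.\ into $f(\textbf{r})\le 0$ where $f(\textbf{r})=\sum_{j=1}^{d-1}(p[j]-p'[j])\,r[j]+(p[d]-p'[d])$. The key structural observation is that $f$ is an \emph{affine} function of $\textbf{r}=\langle r[1],\dots,r[d-1]\rangle$ — in particular it is linear (degree at most one) in each coordinate separately. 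The hypothesis says precisely that $f\le 0$ at every one of the $2^{d-1}$ corners of the box $B=\prod_{j=1}^{d-1}[l_j,h_j]$, and the goal is $f\le 0$ on all of $B$.

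Then I would carry out an induction on $k=d-1$. The base case $k=1$ is exactly Theorem~\ref{the:endpoint2D}: a line that is nonpositive at both endpoints of $[l_1,h_1]$ is nonpositive throughout. For the inductive step, fix an arbitrary $\textbf{r}=\langle r[1],\dots,r[k]\rangle\in B$ and consider the two slices $r[k]=l_k$ and $r[k]=h_k$. On each slice $f$ restricts to an affine function of $\langle r[1],\dots,r[k-1]\rangle$ whose $2^{k-1}$ corner values are among the given $2^{k}$ corner values, hence all $\le 0$; so the induction hypothesis yields $f(r[1],\dots,r[k-1],l_k)\le 0$ and $f(r[1],\dots,r[k-1],h_k)\le 0$. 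Finally $t\mapsto f(r[1],\dots,r[k-1],t)$ is linear in $t$ and nonpositive at the endpoints $l_k,h_k$, so by the one-dimensional argument of Theorem~\ref{the:endpoint2D} it is nonpositive at $t=r[k]$, i.e.\ $f(\textbf{r})\le 0$. Alternatively one can bypass the induction: every $\textbf{r}\in B$ is a convex combination of the $2^{d-1}$ corners of $B$, and an affine $f$ sends that convex combination to the same convex combination of the corner values, each $\le 0$, hence $f(\textbf{r})\le 0$.

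I do not expect a genuine obstacle here; the only points needing care are the reduction step (using $w[d]>0$ to pass from weights to the ratio vector) and the bookkeeping that the corners of a coordinate slice of $B$ form a subset of the corners of $B$, so that the inductive hypothesis applies verbatim.
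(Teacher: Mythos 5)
Your proposal is correct and follows essentially the same route as the paper: the paper's proof also fixes the corner inequalities and then repeatedly applies Theorem~\ref{the:endpoint2D} to pairs of inequalities differing in one coordinate, releasing one ratio coordinate at a time from its endpoints to the full interval, which is exactly your induction on $k=d-1$. Your closing remark that the convex-combination argument bypasses the induction entirely is a valid (and cleaner) alternative, but the main argument you give matches the paper's.
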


\begin{proof}
Because the attribute weight ratio on each dimension $j$ can take the value of $l_j$ or $h_j$, we have $2^{d-1}$ domination vectors in $d$ dimensional space. Therefore, we have the following $2^{d-1}$ inequalities for $S(p)\leq S(p')$.

$\Sigma_{j=1}^{d-3}l_j(p[j]-p'[j])+\boldsymbol{l_{d-2}} (p[d-2]-p'[d-2])+\boldsymbol{l_{d-1}} (p[d-1]-p'[d-1])+(p[d]-p'[d])\leq 0$,

$\Sigma_{j=1}^{d-3}l_j(p[j]-p'[j])+\boldsymbol{l_{d-2}} (p[d-2]-p'[d-2])+\boldsymbol{h_{d-1}} (p[d-1]-p'[d-1])+(p[d]-p'[d])\leq 0$,

$\Sigma_{j=1}^{d-3}l_j(p[j]-p'[j])+\boldsymbol{h_{d-2}} (p[d-2]-p'[d-2])+\boldsymbol{l_{d-1}} (p[d-1]-p'[d-1])+(p[d]-p'[d])\leq 0$,

$\Sigma_{j=1}^{d-3}l_j(p[j]-p'[j])+\boldsymbol{h_{d-2}} (p[d-2]-p'[d-2])+\boldsymbol{h_{d-1}} (p[d-1]-p'[d-1])+(p[d]-p'[d])\leq 0$,

......,

$\Sigma_{j=1}^{d-1}h_j(p[j]-p'[j])+(p[d]-p'[d])\leq 0$.

Given the first two inequalities, according to Theorem \ref{the:endpoint2D}, we have

$\Sigma_{j=1}^{d-3}l_j(p[j]-p'[j])+\boldsymbol{l_{d-2}} (p[d-2]-p'[d-2])+r[d-1] (p[d-1]-p'[d-1])+(p[d]-p'[d])\leq 0$ (i), where $r[d-1]\in [l_{d-1},h_{d-1}]$. Similarly, given the third and fourth inequalities, we have

$\Sigma_{j=1}^{d-3}l_j(p[j]-p'[j])+\boldsymbol{h_{d-2}} (p[d-2]-p'[d-2])+r[d-1] (p[d-1]-p'[d-1])+(p[d]-p'[d])\leq 0$ (ii), where $r[d-1]\in [l_{d-1},h_{d-1}]$. Based on (i) and (ii), we have

$\Sigma_{j=1}^{d-3}l_j(p[j]-p'[j])+w[d-2] (p[d-2]-p'[d-2])+r[d-1] (p[d-1]-p'[d-1])+(p[d]-p'[d])\leq 0$, where $r[d-2]\in [l_{d-2},h_{d-2}]$ and $r[d-1]\in [l_{d-1},h_{d-1}]$. Similarly, we iteratively transform $l_j$ and $h_j$ to $r[j]$. Finally, we have

$\Sigma_{j=1}^{d-1}r[j](p[j]-p'[j])+(p[d]-p'[d])\leq 0$, where $r[j]\in [l_j,h_j]$, $j=1, 2, ..., d-1$.
\end{proof}

\begin{algorithm}[h] \scriptsize \caption{Baseline algorithm for computing eclipse points.}\label{Alg:baseline}
\SetKwInOut{Input}{input}\SetKwInOut{Output}{output}

\Input{a set of $n$ points in $d$ dimensional space.}
\Output{eclipse points.}

\For{i = 1 to n}{
    compute $S(p_i)_{\textbf{r}_k}, k=1,2,...,2^{d-1}$\;
    flag=1\;
    \For{j = 1 to n, $\neq i$}{
    compute $S(p_j)_{\textbf{r}_k}, k=1,2,...,2^{d-1}$\;
        \For{k= 1 to $2^{d-1}$}{
            \If{$S(p_j)_{\textbf{r}_k}>S(p_i)_{\textbf{r}_k}$}{
                goto Line 4\;}
        }
        flag=0\;
        break\;
    }
    \If{flag==1}{
        add $p_i$ to eclipse points\;
    }
}
\end{algorithm}

Based on Theorems \ref{the:endpoint2D} and \ref{the:endpointhighD}, the key idea of the baseline algorithm is that we only need to compare the scoring function $S(p_i)$ and $S(p_j)$ for each pair of points $p_i$ and $p_j$ with respect to all the domination vectors. The detailed algorithm is shown in Algorithm \ref{Alg:baseline}. We compute $S(p_i)_{\textbf{r}_k}$ corresponding to the $2^{d-1}$ domination vectors in Line 2 and set the flag as 1. We compute $S(p_j)_{\textbf{r}_k}$ corresponding to the $2^{d-1}$ domination vectors in Line 5. In Line 7, if $S(p_j)_{\textbf{r}_k}>S(p_i)_{\textbf{r}_k}$, it means that $p_j$ cannot eclipse-dominate $p_i$, and then we break from the forloop. If for all $k$ such that $S(p_j)_{\textbf{r}_k}\leq S(p_i)_{\textbf{r}_k}$, it means that $p_j\prec_e p_i$, and then we set the flag as 0. In Line 11, if the flag equals to $1$, it means that there is no other point that can eclipse-dominate $p_i$, so we add $p_i$ to eclipse points.

\begin{theorem}\label{the:3}
The time complexity of Algorithm \ref{Alg:baseline} is $O(n^22^{d-1})$.
\end{theorem}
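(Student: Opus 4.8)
The plan is to analyze Algorithm \ref{Alg:baseline} by a straightforward nested-loop counting argument, since the time complexity claim is essentially a bookkeeping exercise once the structure of the algorithm is clear. First I would observe that the outer loop runs $n$ times, once for each candidate point $p_i$, and that inside each iteration there are two kinds of work: the one-time computation of the $2^{d-1}$ weighted sums $S(p_i)_{\textbf{r}_k}$ in Line 2, which costs $O(d\cdot 2^{d-1})$ (each of the $2^{d-1}$ domination vectors requires summing $d$ terms), and the inner loop over the remaining $n-1$ points.

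Next I would bound the cost of the inner loop. For each $p_j$ we compute its $2^{d-1}$ weighted sums (Line 5) at cost $O(d\cdot 2^{d-1})$, and then in the innermost loop over $k=1,\dots,2^{d-1}$ we perform $O(2^{d-1})$ comparisons. So each inner iteration costs $O(d\cdot 2^{d-1})$, and the inner loop over all $j$ costs $O(n\cdot d\cdot 2^{d-1})$. Combining with the outer loop, the total is $O(n)\cdot\bigl(O(d\cdot 2^{d-1}) + O(n\cdot d\cdot 2^{d-1})\bigr) = O(n^2 d\, 2^{d-1})$. Treating $d$ as a constant (or, more precisely, absorbing the linear factor $d$ as is conventional in this literature when the stated bound is $O(n^2 2^{d-1})$), this simplifies to $O(n^2 2^{d-1})$, which is the claimed bound. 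I would also note that the early termination via \textbf{goto Line 4} and \textbf{break} only helps in practice and does not affect the worst-case bound.

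I do not expect any genuine obstacle here; the only subtlety worth a sentence is justifying why the per-point weighted-sum computation is not the dominant cost and why the correctness of the short-circuiting (established by Theorems \ref{the:endpoint2D} and \ref{the:endpointhighD}) lets us restrict attention to exactly the $2^{d-1}$ domination vectors rather than an infinite family of ratios — that is what keeps the $2^{d-1}$ factor from blowing up. The proof itself would be three or four sentences: state the outer-loop count, state the inner-loop count including the $2^{d-1}$ comparisons, multiply, and conclude.
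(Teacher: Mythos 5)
Your proposal is correct and follows essentially the same route as the paper's proof, which is a one-line nested-loop count ($n \times n \times 2^{d-1}$ iterations). You are in fact slightly more careful than the paper, which silently absorbs the $O(d)$ cost of evaluating each weighted sum; your explicit note that the $d$ factor is treated as a constant (or conventionally suppressed) is a harmless refinement of the same argument.
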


\begin{proof}
Algorithm \ref{Alg:baseline} requires three forloops and each forloop iterates $n$, $n$, and $2^{d-1}$ times, respectively. Thus, Algorithm \ref{Alg:baseline} requires $O(n^22^{d-1})$ time in total.
\end{proof}

\subsection{Transformation-based Algorithm for Two Dimensional Space}\label{subsec:transformation-based2D}

In Subsection \ref{subsec:baseline}, we showed how to compute eclipse points in $O(n^2)$ time due to the two forloops. In this subsection, we show how to transform the eclipse problem to the skyline problem, and then we can employ an efficient $O(n\log n)$ time algorithm to solve the eclipse problem.

In the eclipse query, for each point $p_i(p_i[1],p_i[2])$, there are two domination lines with slopes $-h$ and $-l$, and $p_i$ eclipse-dominates the points in the domination range. For example, in Figure \ref{fig:transformation-based2D}, $p_1$ eclipse-dominates the points on the upper right of the two domination lines. For any two points $p$ and $p'$, the slopes ($-h$ and $-l$) of their two domination lines are the same. Therefore, if $p$ eclipse-dominates $p'$, the intercepts of $p's$ domination lines should be smaller than the corresponding intercepts of $p'$. Therefore, instead of directly comparing $S(p)$ and $S(p')$ for each pair of points for their eclipse-dominance which requires $O(n^2)$, we can map the intercepts of the two domination lines of each pair of points $p$ and $p'$ into a coordinate space, and compare their dominance utilizing $O(n\log n)$ skyline algorithm.

\begin{figure}[htb]
 \centering
 \includegraphics[width=0.35\textwidth]{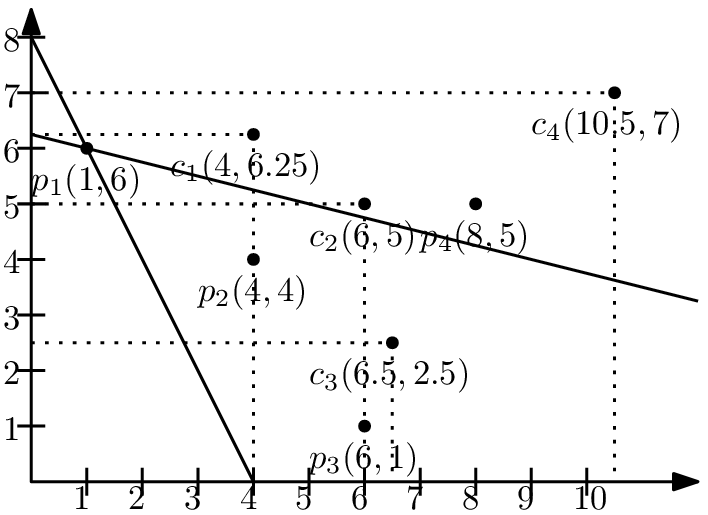}
 \caption{Mapping.}
 \label{fig:transformation-based2D}
\end{figure}

Concretely, for each point $p_i$, the two domination lines have two intercepts with each dimension. We map $p_i$ into $c_i$ by taking the smaller intercept on the $j^{th}$ dimension as $c_i[j]$ because the larger intercept is already represented by the smaller intercept on the other dimension. For example, in Figure \ref{fig:transformation-based2D}, for point $p_1$, the two domination lines have two $y$-intercepts, $6.25$ and $8$. We take the smaller $y$-intercept, $6.25$, as $c_1[2]$. Similarly, we have $c_1[1]=4$. Therefore, we map $p_1$ into $c_1(4,6.25)$. Based on this mapping, we can prove that $p\prec_e p'$ if $c\prec_s c'$ in Theorem \ref{the:transform2D}.

\begin{algorithm}[h] \scriptsize \caption{Transformation-based algorithm for two dimensional space.}\label{Alg:transformation-based2D}
\SetKwInOut{Input}{input}\SetKwInOut{Output}{output}

\Input{a set of $n$ points in two dimensional space, an attribute weight ratio vector $\langle r\rangle$, where $r\in [l,h]$.}
\Output{eclipse points.}

\For{i =1 to n}{
    $c_i[1]=p_i[1]+p_i[2]/h$\;
    $c_i[2]=lp_i[1]+p_i[2]$\;
}

use $O(n\log n)$ algorithm to compute skyline points of $\{c_1, c_2,...,c_n\}$\;
add the corresponding points of skyline points to eclipse points\;
\end{algorithm}

\begin{theorem}\label{the:transform2D}
Given an attribute weight ratio $r\in [l,h]$ and a point $p_i$, we map $p_i$ into $c_i$, where $c_i[j]$ is the smaller intercept on the $j^{th}$ dimension of the two domination lines with slopes $-h$ and $-l$. We have $p\prec_e p'$ if $c\prec_s c'$ in two dimensional space. That is, the eclipse points of dataset $\{p_1, p_2, ..., p_n\}$ equal to the corresponding skyline points of dataset $\{c_1, c_2,...,c_n\}$.
\end{theorem}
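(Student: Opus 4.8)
The plan is to reduce eclipse-dominance between two points to ordinary skyline-dominance between their images, using Theorem~\ref{the:endpoint2D} to collapse the continuum of ratios $r\in[l,h]$ down to the two endpoints $r=l$ and $r=h$. First I would note that in two dimensions $S(p)_{\langle r\rangle}$ equals $r\,p[1]+p[2]$ up to the positive factor $w[2]$, which does not affect any comparison, so by Theorem~\ref{the:endpoint2D} the relation $p\prec_e p'$ holds if and only if both endpoint inequalities $l\,p[1]+p[2]\le l\,p'[1]+p'[2]$ and $h\,p[1]+p[2]\le h\,p'[1]+p'[2]$ hold.

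Next I would translate each endpoint inequality into a coordinate comparison of the mapped points. Since $c_i[2]=l\,p_i[1]+p_i[2]$ is exactly the left-hand side appearing for $r=l$, the first inequality is literally $c[2]\le c'[2]$; geometrically this says that the $-l$ domination line of $p$ has no larger $y$-intercept than that of $p'$. For the second inequality, dividing through by $h>0$ gives $p[1]+p[2]/h\le p'[1]+p'[2]/h$, and since $c_i[1]=p_i[1]+p_i[2]/h$ this is exactly $c[1]\le c'[1]$, i.e., the $-h$ domination line of $p$ has no larger $x$-intercept than that of $p'$. These two intercepts are precisely the ``smaller intercept on each dimension'' used to define $c_i$, so combining the two we get that $p\prec_e p'$ holds if and only if $c[1]\le c'[1]$ and $c[2]\le c'[2]$, that is, if and only if $c\prec_s c'$.

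Finally I would lift this pairwise equivalence to the claimed set equality. A point $p_i$ is an eclipse point exactly when no $p_j$ eclipse-dominates it, which by the equivalence happens exactly when no $c_j$ skyline-dominates $c_i$, i.e., exactly when $c_i$ is a skyline point of $\{c_1,c_2,\ldots,c_n\}$. To turn this into a genuine one-to-one correspondence of the two answer sets I would also observe that when $l<h$ the map $p_i\mapsto c_i$ is injective: the linear system $p[1]+p[2]/h=p'[1]+p'[2]/h$ and $l\,p[1]+p[2]=l\,p'[1]+p'[2]$ forces $p=p'$ unless $l=h$, and the case $l=h$ is just the $1$NN instantiation already treated.

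I expect the only real subtlety to be bookkeeping around the (non-strict) inequalities and the degenerate case $l=h$: one must check that the $\le$ used in the eclipse and skyline definitions line up and that the injectivity step is invoked only when $l<h$. The geometric heart of the argument --- identifying $c_i[1]$ and $c_i[2]$ with the smaller $x$- and $y$-intercepts of the two domination lines --- is routine once Theorem~\ref{the:endpoint2D} is available.
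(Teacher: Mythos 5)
Your proposal is correct and follows essentially the same route as the paper's own proof: both use Theorem~\ref{the:endpoint2D} to reduce eclipse-dominance to the two endpoint inequalities at $r=l$ and $r=h$, and then identify those inequalities (the second after dividing by $h>0$) with the coordinatewise comparisons $c[2]\le c'[2]$ and $c[1]\le c'[1]$ defining skyline-dominance of the mapped points. Your added remarks on injectivity of the map and the degenerate case $l=h$ are sensible refinements that the paper omits, but they do not change the substance of the argument.
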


\begin{proof}
For $p\prec_e p'$, we have
\begin{displaymath}
l  p[1]+p[2]\leq l  p'[1]+p'[2]
\end{displaymath}
\begin{displaymath}
h  p[1]+p[2]\leq h  p'[1]+p'[2]
\end{displaymath}
We map $p$ into $c$
\begin{displaymath}
c[2]=l  p[1]+p[2]
\end{displaymath}
\begin{displaymath}
c[1]=p[1]+\frac{1}{h}p[2]
\end{displaymath}
Similarly, we map $p'$ into $c'$
\begin{displaymath}
c'[2]=l  p'[1]+p'[2]
\end{displaymath}
\begin{displaymath}
c'[1]=p'[1]+\frac{1}{h}p'[2]
\end{displaymath}
Because $c\prec_s c'$, we have
\begin{displaymath}
l  p[1]+p[2]\leq l  p'[1]+p'[2]
\end{displaymath}
\begin{displaymath}
p[1]+\frac{1}{h}p[2]\leq p'[1]+\frac{1}{h}p'[2]
\end{displaymath}
It is easy to see that $p\prec_e p'$ can be equivalent to $c\prec_s c'$.
\end{proof}

Based on Theorem \ref{the:transform2D}, we show our algorithm for computing eclipse points in Algorithm \ref{Alg:transformation-based2D}. We compute $c_i$ of point $p_i$ in Lines 1-3. We employ the $O(n\log n)$ skyline algorithm to compute the skyline points of $\{c_1, c_2,..., c_n\}$ in Line 4.

\begin{example}
We show an example based on Figure \ref{fig:transformation-based2D}. Assume the attribute weight ratio $r\in [1/4,2]$. We have $c_1=(4,6.25)$, $c_2=(6,5)$, $c_3=(6.5,2.5)$, and $c_4(10.5,7)$. We compute the skyline points of $\{c_1,c_2,c_3,c_4\}$, and the skyline points are $c_1$, $c_2$, and $c_3$. Therefore, the corresponding eclipse points are $p_1$, $p_2$, and $p_3$.
\end{example}

\begin{theorem}\label{the:5}
The time complexity of Algorithm \ref{Alg:transformation-based2D} is $O(n\log n)$.
\end{theorem}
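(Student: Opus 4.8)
The plan is to bound the running time of Algorithm~\ref{Alg:transformation-based2D} by a line-by-line accounting, since the algorithm decomposes into three cleanly separated phases whose costs simply add. First I would handle the mapping loop in Lines 1--3: for each of the $n$ points $p_i$ we compute $c_i[1]=p_i[1]+p_i[2]/h$ and $c_i[2]=l\,p_i[1]+p_i[2]$, each requiring only a constant number of arithmetic operations, so this phase costs $O(n)$ time and yields the point set $\{c_1,\dots,c_n\}$.

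Second, I would charge Line 4 to a known $O(n\log n)$-time skyline subroutine applied to $\{c_1,\dots,c_n\}$. In two dimensions this is standard: sort the $c_i$ by their first coordinate in $O(n\log n)$ time, then sweep once in sorted order while maintaining the smallest second coordinate encountered so far, reporting $c_i$ as a skyline point exactly when its second coordinate beats that running minimum; alternatively one may cite the classical divide-and-conquer skyline algorithm. Either way this phase is $O(n\log n)$. Finally, Line 5 iterates over the at most $n$ skyline points and emits the corresponding original points, costing $O(n)$. Summing, $O(n)+O(n\log n)+O(n)=O(n\log n)$, which is the claimed bound. Correctness of what is emitted has already been established by Theorem~\ref{the:transform2D}, so the complexity statement needs nothing beyond this accounting.

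I do not expect a genuine obstacle here: the theorem is essentially additive bookkeeping over the three phases, and the only point deserving a word of care is that the $O(n\log n)$ skyline subroutine invoked in Line 4 must treat ties in a manner consistent with the non-strict $\prec_s$ relation used in the Skyline definition (so that coincident mapped points $c_i$ are handled as intended), but this is a property of the chosen skyline algorithm rather than an added difficulty in the running-time analysis.
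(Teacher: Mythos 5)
Your proposal is correct and follows essentially the same line-by-line accounting as the paper's proof: Lines 1--3 cost $O(n)$, Line 4 invokes an $O(n\log n)$ two-dimensional skyline subroutine, and the final output step is $O(n)$, giving $O(n\log n)$ overall. The extra detail you supply about how the 2D skyline sweep achieves $O(n\log n)$ and about tie-handling is sound but not a departure from the paper's argument.
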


\begin{proof}
Lines 1-3 requires $O(n)$ time. Line 4 requires $O(n\log n)$ time. Thus, Algorithm \ref{Alg:transformation-based2D} requires $O(n\log n)$ time in total.
\end{proof}

\subsection{Transformation-based Algorithm for High Dimensional Space}\label{subsec:transformation-basedhighD}

In this subsection, we show how to transform the eclipse problem to the skyline problem for the high dimensional case similar to the two dimensional case by carefully choosing $d$ domination vectors, and then we can employ an efficient $O(n\log^{d-1} n)$ time algorithm to solve the eclipse problem.

For point $p(p[1],p[2],...,p[d])$, we have the following $2^{d-1}$ domination hyperplanes with respect to $2^{d-1}$ domination vectors as each dimension $j$ of the domination vector can take the boundary values $l_j$ and $h_j$.
\begin{displaymath}
l_1p[1]+l_2p[2]+...+l_{d-1}p[d-1]+p[d]=S(p)_{\textbf{r}_1};
\end{displaymath}
\begin{displaymath}
l_1p[1]+l_2p[2]+...+h_{d-1}p[d-1]+p[d]=S(p)_{\textbf{r}_1};
\end{displaymath}
......
\begin{displaymath}
h_1p[1]+h_2p[2]+...+l_{d-1}p[d-1]+p[d]=S(p)_{\textbf{r}_{2^{d-1}-1}};
\end{displaymath}
\begin{displaymath}
h_1p[1]+h_2p[2]+...+h_{d-1}p[d-1]+p[d]=S(p)_{\textbf{r}_{2^{d-1}}};
\end{displaymath}
We can write the domination vector function matrix as follows.
\[
\begin{bmatrix}
    l_1 & l_2 & ... & l_{d-1} & 1 \\
    l_1 & l_2 & ... & h_{d-1} & 1 \\
    ......\\
    h_1 & h_2 & ... & l_{d-1} & 1 \\
    h_1 & h_2 & ... & h_{d-1} & 1 \\
\end{bmatrix}
\]
Because there are $d$ variables, the rank of the domination vector function matrix is at most $d$. Therefore, we can carefully choose $d$ domination vectors from these $2^{d-1}$ domination vectors to represent the original matrix. Furthermore, the rank of the new $d$-row matrix should be $d$.

We can choose the first row to identify the $d^{th}$ attribute weight ratio and the row with $r[k]=h_j, k=j$ and $r[k]=l_j, j\neq k,k=1,2,...,d-1$ to identify the $j^{th}$ attribute weight ratio. It is easy to see that there are $d-1$ such rows. Therefore, we get a new $d$-row matrix with rank $d$. In fact, each row in the new matrix corresponds to a $c_i[j]$. For example, if we choose $r[1]=h_1$ and $r[j]=l_j$ for $j=2,...,d-1$, the corresponding domination vector is $\textbf{v}_1=\langle h_1,l_2,...,l_{d-1},1 \rangle$. This domination vector $\textbf{v}_1$ corresponds to $c_i[1]$ because we get the smallest $x$-intercept by the domination hyperplane determined by $\textbf{v}_1$. Given the above mapping, we can show that $p\prec_e p'$ if $c\prec_s c'$ in high dimensional space as in the following theorem.

\begin{theorem}\label{the:transformHD}
Given an attribute weight ratio vector $\textbf{r}=\langle r[1],r[2], ...,r[d-1]\rangle$ and a point $p_i$, we map $p_i$ into $c_i$, where $c_i[j]$ is the smallest intercept on the $j^{th}$ dimension of the $d$ domination hyperplanes. We have $p\prec_e p'$ if $c\prec_s c'$ in high dimensional space. That is, the eclipse points of dataset $\{p_1, p_2,...,p_n\}$ equal to the corresponding skyline points of dataset $\{c_i, c_2,...,c_n\}$.
\end{theorem}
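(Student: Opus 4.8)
The plan is to run the two-dimensional argument of Theorem~\ref{the:transform2D} one more time, with Theorem~\ref{the:endpointhighD} in the place Theorem~\ref{the:endpoint2D} occupied there. Denote the $d$ selected domination vectors by $\textbf{v}_d=\langle l_1,\dots,l_{d-1},1\rangle$ (which realizes $c_i[d]$) and, for $j=1,\dots,d-1$, $\textbf{v}_j=\langle l_1,\dots,l_{j-1},h_j,l_{j+1},\dots,l_{d-1},1\rangle$ (which realizes $c_i[j]$). The first step is to record that the domination hyperplane through $p$ with normal $\textbf{v}_j$ meets the $j$-th axis at $S(p)_{\textbf{v}_j}/h_j$ (and at $S(p)_{\textbf{v}_d}$ for $j=d$), and that this is the \emph{smallest} $j$-th intercept among all $2^{d-1}$ domination hyperplanes: increasing any ratio $r[k]$ with $k\neq j,d$ only increases $S(p)_{\textbf{r}}$, while increasing $r[j]$ decreases $S(p)_{\textbf{r}}/r[j]$, so (using $p[k]\ge 0$) the minimum is attained exactly at $\textbf{v}_j$. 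Hence $c[j]=S(p)_{\textbf{v}_j}/h_j$ and $c'[j]=S(p')_{\textbf{v}_j}/h_j$ for $j<d$, and $c[d]=S(p)_{\textbf{v}_d}$, $c'[d]=S(p')_{\textbf{v}_d}$.

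The second step is the routine equivalence. Since $h_j>0$, the comparison $c[j]\le c'[j]$ is the same as $S(p)_{\textbf{v}_j}\le S(p')_{\textbf{v}_j}$ for each $j$, and $c\prec_s c'$ in $d$-space is exactly the conjunction of $c[j]\le c'[j]$ over $j=1,\dots,d$. Thus $c\prec_s c'$ holds iff $S(p)_{\textbf{v}_j}\le S(p')_{\textbf{v}_j}$ for all $j$. One direction of the theorem follows at once: if $p\prec_e p'$, then $S(p)_{\textbf{r}}\le S(p')_{\textbf{r}}$ for every $\textbf{r}$ in the box $\prod_j[l_j,h_j]$, in particular at each $\textbf{v}_j$, so $c\prec_s c'$; translating ``not eclipse-dominated'' into ``not skyline-dominated'' for the corresponding points is then bookkeeping.

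The third step is the reverse implication, and this is where I expect the real difficulty. One wants the $d$ inequalities $S(p)_{\textbf{v}_j}\le S(p')_{\textbf{v}_j}$ to force $S(p)_{\textbf{r}}\le S(p')_{\textbf{r}}$ for \emph{every} $\textbf{r}$ in the box. The natural route is to first recover all $2^{d-1}$ vertex inequalities needed by Theorem~\ref{the:endpointhighD}: each extreme vector is $\textbf{v}_d+\sum_{j\in A}(\textbf{v}_j-\textbf{v}_d)=(1-|A|)\textbf{v}_d+\sum_{j\in A}\textbf{v}_j$ for some $A\subseteq\{1,\dots,d-1\}$, an affine combination of the chosen vectors, and then to invoke Theorem~\ref{the:endpointhighD} to pass from the vertices to the interior. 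The obstacle is that $S(p')_{\textbf{r}}-S(p)_{\textbf{r}}$ is affine in $\textbf{r}$, and an affine function can be nonnegative at the $d$ affinely-independent vertices $\textbf{v}_1,\dots,\textbf{v}_d$ yet strictly negative at another vertex of the box, precisely because the weight $1-|A|$ above is negative once $|A|\ge 2$ (so, e.g., the ``all-upper'' vertex $\langle h_1,\dots,h_{d-1},1\rangle$ is not automatically controlled). Closing this step therefore seems to require either extra sign information about $S(p)-S(p')$ that $c\prec_s c'$ supplies beyond the $d$ coordinate comparisons, or an added hypothesis under which the $d$ chosen vertices genuinely dominate the box (the case $d=2$, where the chosen vectors \emph{are} all the vertices, being the clean instance); working out that extra structure --- or the correct restricted form of the statement --- is the main task.
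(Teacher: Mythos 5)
Your first two steps reproduce the paper's proof of Theorem~\ref{the:transformHD} almost verbatim: the paper likewise identifies $c[j]$ with $S(p)_{\textbf{v}_j}/h_j$ for $j<d$ and $c[d]$ with $S(p)_{\textbf{v}_d}$, writes out the $d$ inequalities coming from $c\prec_s c'$, and then simply declares them ``equivalent'' to eclipse-dominance. So the forward direction, $p\prec_e p'\Rightarrow c\prec_s c'$, is complete in both your write-up and the paper's; it is the direction that guarantees every point returned by the transformation is a genuine eclipse point.

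The obstacle you isolate in your third step is a genuine gap, and the paper does not close it either: its only justification for passing from the $d$ chosen vertices back to all $2^{d-1}$ vertices is that the matrix of domination vectors has rank $d$, but rank yields linear representations with possibly negative coefficients, and those do not transport ``$\leq$''. Your suspicion that the reverse implication needs extra hypotheses is correct in the strongest sense: as stated it fails for $d\geq 3$. Take $d=3$, $l_1=l_2=0.1$, $h_1=h_2=1$, $p=(1,1,0)$, $p'=(0,0,1.5)$. Then $S(p)_{\textbf{r}}-S(p')_{\textbf{r}}=r[1]+r[2]-1.5$ is negative at the three chosen vertices $(0.1,0.1)$, $(1,0.1)$, $(0.1,1)$, so $c=(1.1,1.1,0.2)\prec_s(1.5,1.5,1.5)=c'$; yet at the uncontrolled vertex $(1,1)$ the difference equals $0.5>0$, so $p\nprec_e p'$. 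On the two-point dataset $\{p,p'\}$ both points are eclipse points, but the transformation returns only $p$. Hence only the containment ``skyline of $\{c_1,\dots,c_n\}$ $\subseteq$ eclipse points'' is provable; the claimed set equality (and with it the correctness of Algorithm~\ref{Alg:transformation-basedHigherD} for $d\geq 3$) would require either using all $2^{d-1}$ domination vectors as coordinates or an added structural assumption --- exactly the ``correct restricted form'' you anticipated. In two dimensions the chosen vectors exhaust the vertices of the box, which is why Theorem~\ref{the:transform2D} is unaffected.
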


\begin{proof}
If $p\prec_e p'$, for all $r[j]\in \{l_j,h_j\}$, where $j=1,2,...,d-1$, we have $S(p)\leq S(p')$ for $2^{d-1}$ domination vectors. We can carefully choose $d$ domination vectors to represent these $2^{d-1}$ domination vectors as follows.
\[
\begin{bmatrix}
    l_1 & l_2 & ... & l_{d-1} & 1 \\
    l_1 & l_2 & ... & \boldsymbol{h_{d-1}} & 1 \\
    ......\\
    l_1 & \boldsymbol{h_2} & ... & l_{d-1} & 1 \\
    \boldsymbol{h_1} & l_2 & ... & l_{d-1} & 1 \\
  \end{bmatrix}
\]
We have $S(p)\leq S(p')$ for these $d$ domination vectors as follows.
\begin{displaymath}
  \sum_{j=1}^{d-1}l_jp[j]+p[d]\leq  \sum_{j=1}^{d-1}l_jp'[j]+p'[d]
\end{displaymath}

and for $j=1,2,...,d-1$
\begin{equation}\label{equ:rank}
  h_jp[j]+\sum_{k=1,k\neq j}^{d-1}l_kp[k]+p[d]\leq h_jp'[j] \sum_{k=1,k\neq j}^{d-1}l_kp'[k]+p'[d]
\end{equation}

If $c\prec _sc'$, for all $j=1,2,...,d$, we have $c[j]\leq c'[j]$. That is
\begin{displaymath}
\frac{p[d]+h_1p[1]+\sum_{k=2,k\neq 1}^{d}l_{k}p[k]}{h_1}
\end{displaymath}
\begin{displaymath}
\leq \frac{p'[d]+h_1p'[1]+\sum_{k=2,k\neq 1}^{d}l_{k}p'[k]}{h_1}
\end{displaymath}
\begin{displaymath}
\frac{p[d]+h_2p[2]+\sum_{k=2,k\neq 2}^{d}l_{k}p[k]}{h_2}
\end{displaymath}
\begin{displaymath}
\leq \frac{p'[d]+h_2p'[2]+\sum_{k=2,k\neq 2}^{d}l_{k}p'[k]}{h_2}
\end{displaymath}
\begin{displaymath}
  ......
\end{displaymath}
\begin{displaymath}
\frac{p[d]+h_{d-1}p[d-1]+\sum_{k=2,k\neq d-1}^{d}l_{k}p[k]}{h_{d-1}}
\end{displaymath}
\begin{displaymath}
\leq \frac{p'[d]+h_{d-1}p'[d-1]+\sum_{k=2,k\neq d-1}^{d}l_{k}p'[k]}{h_{d-1}}
\end{displaymath}
\begin{displaymath}
l_1p[1]+l_2p[2]+...+l_{d-1}p[d-1]+p[d]
\end{displaymath}
\begin{displaymath}
\leq l_1p'[1]+l_2p'[2]+...+l_{d-1}p'[d-1]+p'[d];
\end{displaymath}
which is equivalent to Equation \ref{equ:rank}.
\end{proof}

Based on Theorem \ref{the:transformHD}, the detailed algorithm for computing eclipse points in high dimensional space is shown in Algorithm \ref{Alg:transformation-basedHigherD}. For each point $p_i$, we map it into $c_i$ in Lines 1-4, and then use the $O(nlog^{d-1}n)$ high dimensional skyline algorithm to compute skyline points of $\{c_1,c_2,...,c_n\}$. The corresponding points $p_i,1\leq i\leq n$ are eclipse points.

\begin{algorithm}[h] \scriptsize \caption{Transformation-based algorithm for high dimensional space.}\label{Alg:transformation-basedHigherD}
\SetKwInOut{Input}{input}\SetKwInOut{Output}{output}

\Input{a set of $n$ points in high dimensional space, attribute weight ratio vector $\textbf{r}=\langle r[1], ..., r[d-1]\rangle$, where $r[j]\in [l_j, h_j]$ for $j=1,2,...,d-1$.}
\Output{eclipse points.}

\For{i =1 to n}{

    $c_i[d]=\sum_{j=1}^{d-1}l_jp_i[j]+p_i[d]$\;
    \For{j = 1 to d-1}{
        $c_i[j]=\frac{p_i[d]+h_jp_i[j]+\sum_{k=2,k\neq j}^{d-1}l_{k}p_i[k]}{h_j}$\;
    }
}

use the $O(n\log^{d-1} n)$ ECDF algorithm \cite{DBLP:journals/cacm/Bentley80} to compute skyline points of $\{c_1,c_2,...,c_n\}$\;
add the corresponding points of skyline points to Eclipse points\;
\end{algorithm}

\begin{theorem}\label{the:7}
The time complexity of Algorithm \ref{Alg:transformation-basedHigherD} is $O(n\log^{d-1} n)$.
\end{theorem}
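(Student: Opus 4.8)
The plan is to mirror the structure of Theorems \ref{the:3} and \ref{the:5}: bound the cost of the mapping phase and then add the cost of the invoked skyline subroutine. So first I would analyze Lines 1--5 of Algorithm \ref{Alg:transformation-basedHigherD}. The outer \texttt{for} loop runs $n$ times. Inside each iteration, computing $c_i[d]=\sum_{j=1}^{d-1}l_jp_i[j]+p_i[d]$ takes $O(d)$ time, and the inner \texttt{for} loop over $j=1,\dots,d-1$ computes each $c_i[j]$ via a sum with at most $d-2$ terms, for $O(d)$ time per coordinate and hence $O(d^2)$ time for all $d-1$ coordinates of $c_i$. (One can shave this to $O(d)$ per point by precomputing the prefix sum $\sum_{k}l_kp_i[k]$ once and adjusting it for each $j$, but this is not needed for the stated bound.) Thus the mapping phase costs $O(nd^2)=O(n)$ when $d$ is treated as a constant, exactly as in the proof of Theorem \ref{the:5}.

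Second, I would account for Line 6, the call to the ECDF/skyline algorithm of \cite{DBLP:journals/cacm/Bentley80} on the $n$ mapped points $\{c_1,\dots,c_n\}$ in $d$-dimensional space, which runs in $O(n\log^{d-1}n)$ time by the cited result. Line 7 merely walks through the returned skyline points and outputs their preimages under the bijective mapping $p_i\mapsto c_i$, costing $O(n)$ time. Adding the three phases gives $O(n)+O(n\log^{d-1}n)+O(n)=O(n\log^{d-1}n)$, which is the claimed bound. Correctness of the output (that these are exactly the eclipse points) is not part of this complexity theorem and is already furnished by Theorem \ref{the:transformHD}.

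There is no real obstacle here; the only point requiring a sentence of care is that the mapping is performed once per point rather than once per pair, so the quadratic blow-up of the baseline (Algorithm \ref{Alg:baseline}) is avoided, and that the dimension $d$ enters only as a constant factor in the mapping phase while it appears in the exponent of the logarithm solely through the black-box skyline subroutine. I would state the proof in two sentences along the lines of: ``Lines 1--5 and Line 7 each require $O(n)$ time; Line 6 requires $O(n\log^{d-1}n)$ time by \cite{DBLP:journals/cacm/Bentley80}; hence the total is $O(n\log^{d-1}n)$.''
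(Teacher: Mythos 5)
Your proposal is correct and follows essentially the same route as the paper's proof: bound the per-point mapping phase by a polynomial in $d$ (linear in $n$ for constant $d$), charge $O(n\log^{d-1}n)$ to the ECDF skyline subroutine of \cite{DBLP:journals/cacm/Bentley80}, and sum. Your accounting is in fact slightly more careful than the paper's (which states $O(nd)$ for the mapping loop where a naive implementation gives $O(nd^2)$, a discrepancy you correctly note and resolve), but this does not change the stated bound.
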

\vspace{-1em}
\begin{proof}
Lines 1-4 require $O(nd)$ time. Line 5 requires $O(n\log^{d-1} n)$ time. Thus, Algorithm \ref{Alg:transformation-basedHigherD} requires $O(n\log^{d-1} n)$ time in total.
\end{proof}

\section{Index-based Algorithms}\label{sec:index-based}
The transformation-based algorithm we presented in Section \ref{sec:transformation-based} is more efficient than the baseline algorithm. However it is still computationally expensive for real time queries, since it computes each query for the entire dataset from scratch. In this section, we show more efficient algorithms utilizing index structures and duality transform.

For a better perspective, we transform our problem from the primal space to the dual space by duality transform \cite{de2000computational}. For a point $p=(p[1],p[2],...,p[d])$, its dual hyperplane is $x_d=p[1]x_1+p[2]x_2+...+p[d-1]x_{d-1}-p_d$. For a hyperplane $x_d=p[1]x_1+p[2]x_2+...+p[d-1]x_{d-1}+p[d]$, its dual point is $(p[1],p[2],...,p[d-1],-p[d])$. For example, in Figure \ref{fig:dual2D}, for point $p_1(1,6)$, its corresponding line in the dual space is $y=x-6$. We show how to construct the index structures and process the queries in two dimensional space in Subsection \ref{subsec:index-based2D} and high dimensional space in Subsection \ref{subsec:index-basedhigh}.

\subsection{Index-based Algorithm for Two Dimensional Space}\label{subsec:index-based2D}

We first show how to compute $1$NN and skyline in the dual space in Figure \ref{fig:dual2D}, and then introduce our algorithm for computing eclipse points. We have four points $p_1,p_2,p_3,p_4$ in the primal space (Figure \ref{fig:dual2D}(a)) and their corresponding lines in the dual space (Figure \ref{fig:dual2D}(b)). $p_4$ cannot be an eclipse point because it is skyline-dominated by $p_2$ and $p_3$, thus, we omit $p_4$ in the dual space.

\begin{figure}[htb]
 \centering
 \includegraphics[width=0.4\textwidth]{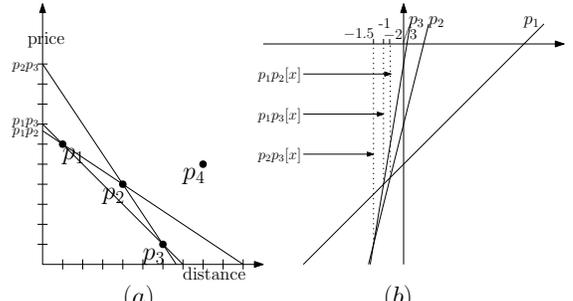}
  \vspace{-1em}
 \caption{(a): Primal space, (b) Dual space.}
 \label{fig:dual2D}
\end{figure}

For all 1NN, skyline, and eclipse queries, we need to find the point that is not in the domination range of any other point. But the domination range differs for each of the queries. For $1$NN, the domination range of each point given attribute weight ratio $r=l$ is determined by the domination line with slope $-l$. Correspondingly, in the dual space, given the $x$-coordinate ($-l$), we need to find the line that is not dominated by any other line, i.e., the closest line to the $x$-axis. For example, in Figure \ref{fig:dual2D}, if $l=2$, the nearest neighbor is $p_1$ in the primal space. Correspondingly, in the dual space, line $p_1$ is the closest line to the $x$-axis when $x=-2$.

For skyline, the domination range of each point is determined by the two domination lines with slopes $0$ and $\infty$, i.e., the attribute weight ratio $r\in (-\infty, 0]$. Correspondingly, in the dual space, given the $x$-coordinate range $(-\infty,0]$, we need to find the lines that are not dominated by any other line. We say line $p_a$ dominates $p_b$ if $p_a$ is consistently closer to the $x$-axis than $p_b$ for the entire range. For example, in Figure \ref{fig:dual2D}, in interval $(-\infty,p_2p_3[x]]$ of the $x$-axis, the closest line to the $x$-axis is $p_1$, and in interval $({p_1p_3}[x],0]$ of the $x$-axis, the closest line to the $x$-axis is $p_3$, where ${p_ip_j}[x]$ is the $x$-coordinate of the intersection of lines $p_i$ and $p_j$ in the dual space. However, in the entire query range, there is no line that can dominate $p_2$. Therefore, the skyline query result is $p_1$, $p_2$, and $p_3$.

For eclipse, given a ratio range $r\in [l,h]$, the domination range of each point is determined by the two domination lines with slopes $-h$ and $-l$. Correspondingly, in the dual space, given the $x$-coordinate range $[-h,-l]$, we need to find the lines that are not dominated by any other line, i.e., consistently closer to the $x$-axis within the range. Since the order of the lines (in their closeness to the $x$-axis) only changes when the two lines intersect, in order to quickly find the lines that are not dominated by any other line within the query range $[-h,-l]$, we can partition the $x$-axis by intervals where for each interval, the order of the lines does not change. The partitioning points are naturally determined by the intersections between each pair of lines. This motivates us to build 1) an index structure (Order Vector Index) that stores the intervals and their corresponding order of the lines, and 2) an index structure (Intersection Index) that stores the intersections which affect the consistent order of the lines within a range. In this way, given any query range, we can quickly retrieve all the lines that are not dominated by any other line within the range.

\subsubsection{Indexing}
In this subsection, we show how to build Order Vector Index and Intersection Index.

\partitle{Order Vector Index}
We build an Order Vector Index which partitions the $x$-axis into ${u\choose 2}+1$ intervals, and each entry $ov_i$ corresponding to an interval stores the order of the lines in their closeness to the $x$-axis, where $u$ is the number of skyline points and $u\choose 2$ is the number of intersections between the $u$ lines in the dual space. For example, in Figure \ref{fig:indexing}, we partition the $x$-axis into four intervals. The last interval is $(-2/3,0]$, and it stores the order of the lines $\textbf{ov}_4=\langle 2,1,0 \rangle$ corresponding to $p_3,p_2,p_1$.

\begin{algorithm}[h] \scriptsize \caption{Indexing in two dimensional space.}\label{Alg:index-basedPre2D}
\SetKwInOut{Input}{input}\SetKwInOut{Output}{output}

\Input{a set of $n$ points in two dimensional space.}
\Output{Order Vector Index and Intersection Index.}

use the $O(n\log n)$ time complexity algorithm to compute the $u$ skyline points of $n$ points\;

\For{i = 1 to u}{
    compute the corresponding line $p_i$ in the dual space for point $p_i$ \;
}

\For{i =1 to u-1}{
    \For{j=i+1 to u}{
        compute value $p_ip_j[x]$ by eliminating $y$ from two lines $p_i$ and $p_j$\;}}

sort those $p_ip_j[x]$ in ascending order $v_1,v_2,...,v_{u\choose 2}$, and record the corresponding two lines for each $v_{i}$, $i=1,2,...,{u\choose 2}$\;

\For{i=2 to ${u\choose 2} +1$}{

\For{j =1 to u}{
    $startY_j=(v_{i-1}+\epsilon)p_j[1]-p_j[2]$ ($v_1-\epsilon$ if $i=1$)\;
}
\For{k =1 to u}{
    compute $\textbf{ov}_{i-1}[k]$\;
}
}
\end{algorithm}

The detailed algorithm is shown in Algorithm \ref{Alg:index-basedPre2D}. Because eclipse points are the subset of skyline points, we find skyline points in Line 1. In Lines 2-3, we compute the corresponding line $p_i$ in the dual space for point $p_i$. In Lines 4-5, we compute value $p_ip_j[x]$ by eliminating $y$ from two lines $p_i$ and $p_j$, where $p_ip_j[x]$ is the $x$-coordinate of the intersection of lines $p_i$ and $p_j$. In Line 7, we sort all $u\choose 2$ intersections' $x$-coordinates. Because the order of the lines in their closeness to the $x$-axis does not change in the same interval, we choose $v_i+\epsilon$ as the $x$-coordinate to compute the $y$-coordinate in Line 10 as $startY_i$ which is used to determine the order of the lines in their closeness to the $x$-axis in the interval, where $\epsilon$ is a very small number. The reason for adding a very small number is that we compute the initial order in the interval rather than on the interval boundary $v_i$. Finally, we obtain the $\textbf{ov}_i$ for each interval in Line 12. $\textbf{ov}_i[k]$ means there are $\text{ov}_i[k]$ lines that can dominate line $p_k$ to the $x$-axis in the $i^{th}$ interval.

\partitle{Intersection Index}
For two dimensional case, we record the corresponding two lines for each interval boundary $v_i$ in Line 7 in Algorithm \ref{Alg:index-basedPre2D}, and the interval boundaries in Order Vector Index are exactly the $x$-coordinates of the intersections. Therefore, we can use Order Vector Index to index both order vectors and intersections in two dimensional space.

\begin{figure}[htb]
 \centering
 \includegraphics[width=0.45\textwidth]{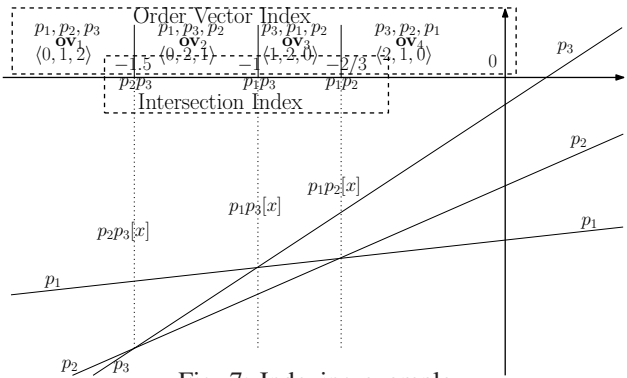}
  \vspace{-1em}
 \caption{Indexing example.}
 \label{fig:indexing}
\end{figure}
\vspace{-1em}

\begin{example}
We show how to build the index structures in Figures \ref{fig:dual2D} and \ref{fig:indexing}. In Line 3, we map points $p_1(1,6)$, $p_2(4,4)$, $p_3(6,1)$ in the primal space (shown in Figure \ref{fig:dual2D}(a)) into linear equations $y=x-6$, $y=4 x-4$, $y=6 x-1$ in the dual space (shown in Figure \ref{fig:dual2D}(b)) by duality transform, respectively. In Line 6, taking two lines $p_1$ and $p_2$ as an example, we substitute $y=4 x-4$ into $y=x-6$, and get the $x$-coordinate of the intersection of lines $p_1$ and $p_2$, $p_1p_2[x]=-2/3$. Similarly, we have $p_1p_3[x]=-1$ and $p_2p_3[x]=-1.5$. In Line 7, we sort $-1.5,-1,-2/3$ and record the corresponding two lines for each value, e.g., $-2/3$ is formed by two lines $p_1$ and $p_2$. Therefore, we have four intervals $(-\infty,-1.5]$, $(-1.5,-1]$, $(-1,-2/3]$, and $(-2/3,0]$. Taking the last interval $(-2/3,0]$ as an example, we compute its corresponding $\textbf{ov}_4$ in Lines 9-12. We compute the $y$-coordinate $startY_i$ for lines $p_1$, $p_2$, and $p_3$ in Line 10. We have $startY_1=-1/2-6=-6.5$ by choosing $\epsilon=1/6$. Similarly, we have $startY_2=-6$, and $startY_3=-4$. Therefore, the corresponding $\textbf{ov}_4$ is $ \langle 2,1,0\rangle$.
\end{example}

\subsubsection{Querying}

In this subsection, we show how to process the query. Given an attribute weight ratio $r\in [l,h]$, the corresponding query range in the dual space is $[-h,-l]$. Each interval stores the order of the lines in their closeness to the x-axis within that interval.  Our goal is to determine if a line is consistently closer than other lines within the query range $[-h,-l]$.  Hence we can start with the order in one of the intervals as the initial order, then determine if the order is consistent by enumerating through the intersection points in the range.

The detailed algorithm is shown in Algorithm \ref{Alg:index-basedQuery2D}. We get the initial $\textbf{ov}$ for $-l$ from Order Vector Index and use binary search to find the intersections whose $x$-coordinates lie between $-h$ and $-l$ from Intersection Index in Lines 1-2. We assume there are $m$ intersections within the query range. The worst case for $m$ is $u\choose 2$, i.e., all the intersections lie in the query range, but $m$ is much smaller than $u\choose 2$ in practice given the small query range. For the intersection of lines $p_a$ and $p_b$, if $\textbf{ov}[a]$ is smaller than $\textbf{ov}[b]$, that means line $p_a$ dominates $p_b$ before this intersection (in $[-h,p_ap_b[x]]$). After this intersection, i.e., in $[p_ap_b[x],-l]$, $p_b$ dominates $p_a$. That is, the number of lines that can dominate $p_b$ will be subtracted by one (Line 6). Similarly, if $\textbf{ov}[a]$ is larger than $\textbf{ov}[b]$, we have $\textbf{ov}[a]--$ in Line 8. Finally, if $\textbf{ov}[i]=0$, that is there is no other point that can dominate $p_i$. We add $p_i$ to eclipse points.
\vspace{-1em}

\begin{algorithm}[h] \scriptsize \caption{Query for two dimensions.}\label{Alg:index-basedQuery2D}
\SetKwInOut{Input}{input}\SetKwInOut{Output}{output}

\Input{Order Vector Index and Intersection Index, attribute weight ratio $r\in [l,h]$.}
\Output{eclipse points.}

use binary search to find the interval that contains $-l$ and get the initial $\textbf{ov}$ for $-l$ from Order Vector Index;

find these intersections whose $x$-coordinate lying between $-h$ and $-l$ from Intersection Index\;

\For{i = 1 to m (number of intersections)}{
    assume lines $p_a$ and $p_b$ form the intersection\;
    \If{$\textbf{ov}[a]<\textbf{ov}[b]$}{
        $\textbf{ov}[b]--$;
    }
    \Else
    {
        $\textbf{ov}[a]--$;
    }
}

\For{i =1 to u}{
    \If{$\textbf{ov}[i]=0$}{
        add $p_i$ to eclipse points\;
    }
}

\end{algorithm}

\begin{example}
Given $r\in [1/4,2]$, we have the corresponding query $[-2,-1/4]$. We search $-l=-1/4$ which belongs to the interval of $(-2/3,0]$ from Order Vector Index, and get the initial $\textbf{ov}_4=\langle 2,1,0\rangle$ in Line 1. In Line 2, we find intersections $p_1p_2$, $p_1p_3$, $p_2p_3$ from Intersection Index because their $x$-coordinates lie in $[-2,-1/4]$. After intersection $p_1p_2$, $p_2$ cannot dominate $p_1$, thus the number of the lines that can dominate $p_1$ should be subtracted by 1. That is $\textbf{ov}_4[1]=2-1=1$. Similarly, after intersection $p_1p_3$, $\textbf{ov}_4[1]=1-1=0$, after intersection $p_2p_3$, $\textbf{ov}_4[2]=1-1=0$. Finally, we get $\textbf{ov}_4=\langle 0,0,0\rangle $. That is, there is no other line that can dominate $p_1, p_2, p_3$ in $[-2,-1/4]$. Thus, we add points $p_1, p_2, p_3$ to eclipse points.
\end{example}

\begin{table}[htb]\centering\footnotesize
\caption{Example of query for $r\in [1/4,2]$.}\label{tab:table}
 \vspace{-1em}
{%
\begin{tabular}{|c||c|c|c|}
\hline
                         & $p_1$ & $p_2$ & $p_3$\\
\hline
$startY_i$ & -6.5    & -6    & -4 \\
\hline
initial $\textbf{ov}_4$         & 2 & 1 & 0\\
\hline
after $p_1p_2$ $\textbf{ov}_4$  & 1 & 1 & 0\\
\hline
after $p_1p_3$ $\textbf{ov}_4$  & 0 & 1 & 0\\
\hline
after $p_2p_3$ $\textbf{ov}_4$  & 0 & 0 & 0\\
\hline
\end{tabular}}
\end{table}%
\vspace{-1em}

\begin{theorem}\label{the:8}
The time complexity of Algorithm \ref{Alg:index-basedQuery2D} is $O(u+m)$.
\end{theorem}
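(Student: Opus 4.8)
The plan is to account separately for the cost of the three phases of Algorithm~\ref{Alg:index-basedQuery2D}: (i) locating the starting interval and retrieving its order vector, (ii) collecting and processing the intersection points that fall inside the query range, and (iii) the final scan over the $u$ skyline points to emit those with $\textbf{ov}[i]=0$. Summing these three contributions should yield $O(u+m)$.

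First I would bound phase~(i). Line~1 performs a binary search over the ${u\choose 2}+1$ intervals stored in Order Vector Index to find the one containing $-l$; this costs $O(\log u)$. Retrieving the associated order vector $\textbf{ov}$ costs $O(u)$ since it has $u$ entries and we copy it to use as a mutable working array. Likewise, Line~2 uses binary search into Intersection Index (whose keys are the sorted intersection abscissae) to delimit the sublist of intersections with $x$-coordinate in $[-h,-l]$; locating both endpoints is $O(\log u)$, and reading off the $m$ such intersections is $O(m)$. So phase~(i) is $O(u+m+\log u)=O(u+m)$.

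Next, phase~(ii): the \textbf{for} loop of Lines 3--9 iterates exactly $m$ times, once per intersection in the range, and each iteration does a constant amount of work — two array lookups $\textbf{ov}[a],\textbf{ov}[b]$, one comparison, and one decrement. Here I would invoke the correctness reasoning already given in the paragraph preceding the algorithm (the order of two lines in closeness to the $x$-axis flips exactly at their unique intersection, so decrementing the dominated line's counter after each crossing maintains the invariant that $\textbf{ov}[k]$ equals the number of lines dominating $p_k$ over the remaining portion of the range). Thus phase~(ii) costs $O(m)$. Finally, phase~(iii), the loop of Lines 10--13, scans all $u$ entries of $\textbf{ov}$ and emits $p_i$ whenever $\textbf{ov}[i]=0$, costing $O(u)$. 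Adding the three phases gives $O(u)+O(m)+O(u)=O(u+m)$.

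The only mildly delicate point — and the thing I would state explicitly rather than the routine loop-counting — is that each of the $m$ intersections inside the query range is processed exactly once and in an order consistent with ``sweeping'' from $-l$ back toward $-h$ (or the reverse), so that no counter is over- or under-decremented; this is where the fact that Intersection Index returns the in-range intersections as a contiguous, already-sorted sublist is used. Given that structural guarantee from the indexing phase, the time bound is immediate. One caveat worth noting is that the claimed $O(u+m)$ suppresses the $O(\log u)$ binary-search terms, which is legitimate since $\log u = O(u)$; I would either absorb them silently (as the statement does) or remark that the true bound is $O(u+m+\log u)$.
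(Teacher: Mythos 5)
Your proof is correct and follows essentially the same line-by-line accounting as the paper's own proof: binary search plus order-vector retrieval for Lines 1--2, $O(m)$ for the intersection loop, and $O(u)$ for the final scan, summing to $O(u+m)$. Your additional remarks on absorbing the $O(\log u)$ terms and on the sorted contiguity of the in-range intersections are sound elaborations of what the paper leaves implicit.
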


\begin{proof}
Line 1 requires $O(\log u^2)$ time. Line 2 requires $O(\log u^2+m)$ time. Lines 3-8 can be finished in $O(m)$ time. Lines 9-11 can be finished in $O(u)$ time. Thus, Algorithm \ref{Alg:index-basedQuery2D} requires $O(u+m)$ time in total.
\end{proof}

\subsection{Index-based Algorithm for High Dimensional Space}\label{subsec:index-basedhigh}

In this subsection, we show how to build the index structures and process the eclipse query in high dimensional space. The general idea is very similar to two dimensional case. In the dual space of two dimensional space, we need to find the lines that are not dominated by any other line with respect to the $x$-axis (line $y=0$) within the query range $x\in [-h,-l]$. Similarly, in the dual space of high dimensional space, we need to find the hyperplanes that are not dominated by any other hyperplane with respect to the hyperplane $x_d=0$ within the query range $x_1\in [-h_1,-l_1]$,...,$x_{d-1}\in [-h_{d-1},-l_{d-1}]$ for $d$ dimensional space. Therefore, we need Order Vector Index to index the initial order of those hyperplanes in each hypercell with respect to the hyperplane $x_d$=0. In two dimensional space, we want to find the intersections whose $x$-coordinates lying in $[-h,-l]$. Similarly, in high dimensional space, we want to find the $d-1$ dimensional intersecting hyperplanes that are intersecting with range $x_1\in [-h_1,-l_1]$,...,$x_{d-1}\in [-h_{d-1},-l_{d-1}]$ for $d$ dimensional space. Therefore, we need Intersection Index to index the intersecting $d-1$ dimensional hyperplanes for any two $d$ dimensional hyperplanes.

\subsubsection{Indexing}
In this subsection, we show how to build Order Vector Index and Intersection Index.

\partitle{Order Vector Index}
We show how to build Order Vector Index in high dimensional space in Algorithm \ref{Alg:index-basedPreHigherD}. We first compute skyline points using $O(n\log ^{d-1}n)$ time skyline algorithm in Line 1. For each skyline point, we compute its corresponding hyperplane $p_i$ for point $p_i$ in $d$ dimensional dual space in Lines 2-3. In Line 6, we compute the intersecting $d-1$ dimensional hyperplane of $d$ dimensional hyperplanes $p_i$ and $p_j$ by eliminating $x_d$. For these $u \choose 2$ intersecting $d-1$ dimensional hyperplanes, we compute the arrangement\footnote{Let $L$ be a set of $n$ lines in the plane. The set $L$ induces a subdivision of the plane that consists of vertices, edges, and faces. This subdivision is usually referred to as the arrangement induced by $L$ \cite{de2000computational}. Similarly, in high dimensional space, let $H$ be a set of $n$ hyperplanes in $d$ dimensional space. The set $H$ induces a subdivision of the space that consists of vertices, edges, faces, facets, and hypercells.} in Line 7. In Line 8, for each hypercell in the arrangement, we compute its initial $\textbf{ov}_i$ which records the order of the hyperplanes in their closeness to hyperplane $x_d=0$. Therefore, in the query phase, we only need to locate any point from the query range to get the corresponding hypercell and then get the corresponding $\textbf{ov}$ of the hypercell in logarithmic time.

\begin{algorithm}[h] \scriptsize \caption{Indexing in high dimensional space.}\label{Alg:index-basedPreHigherD}
\SetKwInOut{Input}{input}\SetKwInOut{Output}{output}

\Input{a set of $n$ points in high dimensional space.}
\Output{Order Vector Index.}

use the $O(n\log^{d-1} n)$ time complexity algorithm to compute the $u$ skyline points of $n$ points\;
\For{i = 1 to u}{
    compute the dual $d$ dimensional hyperplane of $p_i$\;

}

\For{i = 1 to u-1}{
    \For{j = i+1 to u}{
        compute the $d-1$ dimensional hyperplane of hyperplanes $dh_i$ and $dh_j$ by eliminating $x_d$\;
    }
}

compute the arrangement of these $u\choose 2$ hyperplanes in $d-1$ dimensional space\;
for each hypercell, compute its initial $\textbf{ov}$\;

\end{algorithm}

\partitle{Intersection Index}
It is easy to see that the dominating part of the query for high dimensional space is to find these pairs of hyperplanes whose intersecting hyperplanes intersect with the query range. If we scan all the $u\choose 2$ intersecting $d-1$ dimensional hyperplanes to determine if they intersect with the $d-1$ dimensional query range, the time cost is prohibitively high. Therefore, we show how to index these $u\choose 2$ intersecting hyperplanes to facilitate the search of intersecting hyperplanes in Intersection Index.

We first show a Line Quadtree\footnote{We call line quadtree in two dimensional space or hyperplane octree in high dimensional space, but for the sake of simplicity, we use the term ``line quadtree'' to refer to both two and high dimensional space.} with good average case performance, then show a Cutting Tree with good worst case performance. We note that we are indexing lines/hyperplanes, so the traditional indexes, e.g., R-tree, are not suitable. We use three dimensional space as an example, which corresponds to finding the intersecting lines that are intersecting with rectangle $x_1\in [-h_1,-l_1]$, $x_2\in [-h_2,-l_2]$.

(\emph{Line Quadtree})
Line quadtree is a rooted tree in which every internal node has four children in two dimensional space. In genral, a $d$ dimensional hyperplane octree has $2^d$ children in each internal node. Every node in line quadtree corresponds to a square. If a node $t$ has children, then their corresponding squares are the four quadrants of the square of $t$, referred to as NE, NW, SW, and SE. Figure \ref{fig:quadtree} illustrates an example of line quadtree and the corresponding subdivision. We set the maximum capacity for each node as $3$, i.e., we need to partition a square into four subdivisions if there are more than $3$ lines going through this square.

\begin{figure}[htb]
 \centering
 \includegraphics[width=0.4\textwidth]{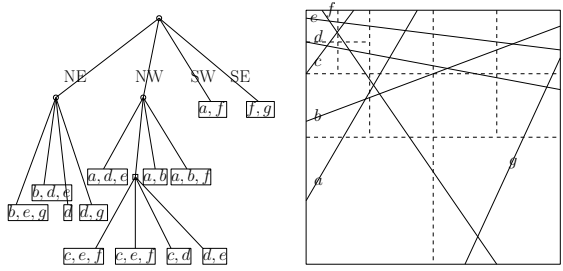}
 \caption{Line quadtree with maximum capacity $3$.}
 \label{fig:quadtree}
\end{figure}

It is easy to see how to construct line quadtree. The recursive definition of line quadtree is immediately translated into a recursive algorithm: split the current square into four quadrants, partition the line set accordingly, and recursively construct line quadtrees for each quadrant with its associated line set. The recursion stops when the line set contains less than maximum capacity lines. Line quadtree can be constructed in $O(de\times n)$ time with $O(de\times n)$ nodes, where $de$ is the depth of the corresponding line quadtree.

To query line quadtree is straightforward. We start at the root node and examine each child node to check if it intersects the range being queried for. If it does, recurse into that child node. Whenever we encounter a leaf node, examine each entry to see if it intersects with the query range, and return it if it does. Finally, we combine all the returned lines.

Line quadtree has a very good performance in the average case, however, the worst case is $O(n)$, i.e., the depth for line quadtree is $O(n)$ in the worst case. Therefore, we present an alternative index structure cutting tree which has a good worst case guarantee ($O(\log n)$ time complexity).

\vspace{-1em}
\begin{figure}[htb]
 \centering
 \includegraphics[width=0.4\textwidth]{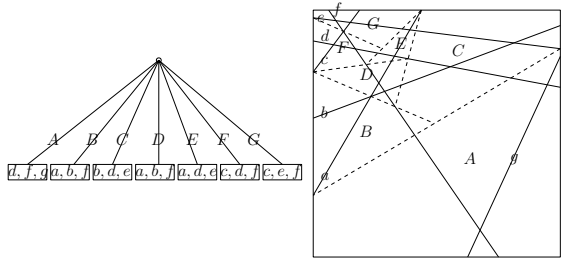}
 \caption{A $(1/3)$-cutting of size seven for a set of seven lines.}
 \label{fig:cuttingtree}
\end{figure}

(\emph{Cutting Tree})
Cutting tree partitions the space into a set of possibly unbounded triangles with the property that no triangle is crossed by more than $n/t$ lines, where $n$ is the number of lines and $t\in [1,n]$ is a parameter. For any set $L$ of $n$ lines in the plane, a $(1/t)$-cutting of size $O(t^2)$ exists. Moreover, such a cutting can be constructed in $O(nt)$ time \cite{de2000computational}. We show a $(1/3)$-cutting of size seven for a set of $7$ lines in Figure \ref{fig:cuttingtree}. For each triangle, there are at most ${7/3}$ intersecting lines. We note that cutting tree also can be designed as a tree structure as line quadtree.

\begin{theorem}\cite{de2000computational}
For any parameter $t$, it is possible to compute a $(1/t)$-cutting of size $O(t^d)$ with a deterministic algorithm that takes $O(nt^{d-1})$ time. The resulting data structure has $O(\log ^dn)$ query time. The query time can be reduced to $O(\log n)$.
\end{theorem}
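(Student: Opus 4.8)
The plan is to treat this as a citation of a well-established result from computational geometry (the statement is attributed to \cite{de2000computational}), so rather than reproving it from scratch, I would sketch the construction and the recursive analysis that justify the claimed bounds. The key object is a \emph{cutting tree}: given a set $H$ of $n$ hyperplanes in $\mathbb{R}^d$, one builds a $(1/t)$-cutting $\Xi$ of $\mathbb{R}^d$ consisting of $O(t^d)$ simplices such that each simplex is crossed by at most $n/t$ hyperplanes of $H$. For each simplex $\sigma \in \Xi$, one recurses on the conflict list $H_\sigma$ (the hyperplanes crossing $\sigma$), producing a tree whose nodes store cuttings and whose leaves have an empty conflict list.

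First I would recall the cutting existence and construction result in the plane (a $(1/t)$-cutting of size $O(t^2)$ computable in $O(nt)$ time) and its $d$-dimensional generalization (size $O(t^d)$, construction time $O(nt^{d-1})$), which follows from the standard $\varepsilon$-net / random-sampling argument combined with derandomization. Next I would set up the recursion: at the root we have $n$ hyperplanes; each child of a node with $m$ hyperplanes handles at most $m/t$ hyperplanes, and there are $O(t^d)$ children. Choosing $t$ to be a suitable constant, the recursion has depth $O(\log_t n) = O(\log n)$, and the total size obeys $\text{Size}(n) = O(t^d) \cdot \text{Size}(n/t) + O(t^d)$, which for constant $t$ solves to $O(n^{d+\varepsilon})$ for any $\varepsilon>0$; the $O(\log^d n)$ query-time version instead uses a slowly growing $t$ (e.g. $t=n^{1/\log n}$ or a fixed large constant at each of $d$ ``layers''), giving $O(\log^d n)$ query time, and a more careful multi-level tradeoff (increasing $t$ toward the leaves) collapses the query time to $O(\log n)$ at the cost of extra preprocessing. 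Construction time is bounded level by level: the work at level $i$ is $O\big(\sum_\sigma |H_\sigma| \cdot t^{d-1}\big) = O(n t^{d-1})$ summed appropriately, giving the stated $O(nt^{d-1})$ bound for the single-level version and a polynomial bound overall.

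Finally I would verify that a point-location / range query walks root-to-leaf: given the query cell (a box $x_1\in[-h_1,-l_1],\dots,x_{d-1}\in[-h_{d-1},-l_{d-1}]$ in our application, or a query point in the canonical formulation), at each node we locate which simplex of the $O(t^d)$-size cutting contains (or meets) the query and descend; since each cutting has constant complexity, each step costs $O(1)$ (or $O(\log t)$), and the depth bound gives $O(\log^d n)$ or, in the refined construction, $O(\log n)$.

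The main obstacle — and the reason the paper simply cites this — is the derandomized cutting construction itself and the delicate multi-level parameter balancing that trades preprocessing size against query time to push $O(\log^d n)$ down to $O(\log n)$; reproducing either in detail is a substantial excursion into Chazelle-style cutting machinery, so the right move here is to invoke \cite{de2000computational} and present only the recursive skeleton above.
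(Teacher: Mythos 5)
Your treatment matches the paper's: this theorem is stated purely as a citation of the cutting/cutting-tree machinery from \cite{de2000computational}, and the paper supplies no proof of its own, exactly as you propose. Your recursive sketch of the cutting-tree construction and the parameter trade-off for the query time is consistent with the standard argument, so simply invoking the reference (with or without that sketch) is the right call here.
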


We note that the deterministic algorithms for constructing cutting tree based on the arrangement structure are theoretical in nature and involve large constant factors \cite{de2000computational}. Therefore, in this paper, we implement the cutting tree index structure using the probabilistic schemes \cite{DBLP:conf/stoc/Clarkson85}\cite{DBLP:conf/stoc/Clarkson86}, which will be shown in the experimental section. We randomly choose $t$ hyperplanes from $n$ hyperplanes, and the formed arrangement structure will have a high probability satisfying the requirement of cutting tree. However, constructing arrangement is also a prohibitively high cost task with time complexity $O(n^d)$ \cite{DBLP:journals/siamcomp/EdelsbrunnerOS86}. Following the same spirit, the space with more hyperplanes will be chosen. For the space with more hyperplanes, the space will be chosen with a higher probability to be partitioned. Due to the same problem of constructing arrangement, we cannot process the point location in logarithmic time in high dimensional space in practice. Therefore, we compute $\textbf{ov}$ in Line 1 of Algorithm $\ref{Alg:index-basedQueryHigherD}$, which requires $O(u)$ time and does not impact the entire time complexity.

\subsubsection{Querying}

In this subsection, we show how to process the eclipse query based on Order Vector Index and Intersection Index in Algorithm \ref{Alg:index-basedQueryHigherD}. The idea is very similar to the two dimensional case. In Line 1, we get the initial $\textbf{ov}$ by point location in $O(\log n)$ time. We then quickly find the intersecting $d-1$ dimensional hyperplanes based on Intersection Index in Line 2. For any $d-1$ dimensional hyperplane formed by $d$ dimensional hyperplanes $p_a$ and $p_b$, if $\textbf{ov}[a]<\textbf{ov}[b]$, that means before this intersecting $d-1$ dimensional hyperplane, $p_a$ dominates $p_b$, thus, we have $\textbf{ov}[a]--$. Otherwise, we have $\textbf{ov}[b]--$. Finally, if $\textbf{ov}[i]=0$, we add $p_i$ to eclipse points.

\begin{algorithm}[h] \scriptsize \caption{Query for high dimensions.}\label{Alg:index-basedQueryHigherD}
\SetKwInOut{Input}{input}\SetKwInOut{Output}{output}

\Input{Order Vector Index and Intersection Index, attribute weight ratio vector $\langle r[1], r[2],...,r[d-1]\rangle$, where $r[j]\in [l_j, h_j]$ for $j=1,2,...,d-1$.}
\Output{eclipse points.}

choose any point from the query range to process the point location to get the initial $\textbf{ov}$ from Order Vector Index\;

find these $d-1$ dimensional hyperplanes intersecting with the query range $x_1\in [-h_1,-l_1], x_2\in [-h_2,-l_2]$,...,$x_{d-1}\in [-h_{d-1},-l_{d-1}]$ from Intersection Index\;

\For{i = 1 to m (number of intersecting $d-1$ dimensional hyperplanes)}{
    assume hyperplanes $p_{a}$ and $p_b$ form this $d-1$ dimensional hyperplane\;
    \If{$\textbf{ov}[a]<\textbf{ov}[b]$}{
        $\textbf{ov}[b]--$;
    }
    \Else
    {
        $\textbf{ov}[a]--$;
    }
}

\For{i =1 to u}{
    \If{$\textbf{ov}[i]=0$}{
        add $p_i$ to eclipse points\;
    }
}

\end{algorithm}

\begin{theorem}\label{the:10}
The time complexity of Algorithm \ref{Alg:index-basedQueryHigherD} is $O(u+m)$.
\end{theorem}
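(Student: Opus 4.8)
The plan is to mirror the analysis of Algorithm~\ref{Alg:index-basedQuery2D} (Theorem~\ref{the:8}), accounting line by line for the cost of Algorithm~\ref{Alg:index-basedQueryHigherD}. First I would handle Line~1: by the discussion preceding the algorithm, locating an arbitrary point of the query range inside the arrangement of the $\binom{u}{2}$ intersecting $(d-1)$-dimensional hyperplanes and reading off the stored $\textbf{ov}$ takes $O(\log n)$ time in the idealized cutting-tree data structure; in the practical implementation we instead recompute $\textbf{ov}$ directly, which costs $O(u)$ (as noted in the Intersection Index discussion). Either way this is absorbed into the claimed $O(u+m)$ bound, since $\log n = O(u)$ whenever the output is nonempty (and the degenerate $u=0$ case is trivial).

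Next I would treat Line~2, the query into the Intersection Index: retrieving the $m$ intersecting $(d-1)$-dimensional hyperplanes that cross the query box $x_1\in[-h_1,-l_1],\dots,x_{d-1}\in[-h_{d-1},-l_{d-1}]$ takes time proportional to the output size $m$ plus the search overhead, which is $O(\log^{d} n)$ (or $O(\log n)$ after the refinement) for the cutting tree and at most $O(de\times u)$ in the worst case for the line quadtree. I would fold the search overhead into the $O(u)$ term exactly as in the two-dimensional proof, so this line contributes $O(u+m)$.

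Then Lines~3--8: the loop runs once per intersecting hyperplane, and each iteration does a constant amount of work (one comparison of two entries of $\textbf{ov}$ and one decrement), so this block costs $O(m)$. Finally Lines~9--11 scan the $u$ entries of $\textbf{ov}$ and append points with a zero count, costing $O(u)$. Summing the four contributions gives $O(u) + O(u+m) + O(m) + O(u) = O(u+m)$, which is the claimed bound.

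The main obstacle is being careful about which cost model for the Intersection Index we are quoting: the clean $O(\log n)$ point-location and range-search figures hold only for the theoretical cutting-tree construction, whereas the paper actually uses a randomized construction and recomputes $\textbf{ov}$ in $O(u)$ time. The resolution, already anticipated in the text, is that all of these search/location overheads are dominated by the $O(u)$ term that we pay anyway in Lines~9--11, so the final $O(u+m)$ statement is robust to the choice of underlying structure; I would state this explicitly to avoid the appearance of sweeping the high-dimensional point-location difficulty under the rug.
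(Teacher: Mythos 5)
Your proposal is correct and follows essentially the same line-by-line accounting as the paper's proof: point location for the initial $\textbf{ov}$, the Intersection Index range query costing $O(\log u^2+m)$, the $O(m)$ decrement loop, and the final $O(u)$ scan, all summing to $O(u+m)$. The only cosmetic difference is that the paper charges the search overheads as $O(\log u^2)$ (trivially absorbed into $O(u)$) rather than $O(\log n)$, which avoids your side remark about $\log n=O(u)$.
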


\begin{proof}
Lines 1 can be finished in $O(\log u^2)$ time based on the arrangement structure \cite{DBLP:journals/comgeo/ChazelleF94}. Line 2 requires $O(\log u^2+m)$ time based on the cutting tree index structure. Lines 3-8 require $O(m)$ time. Lines 9-11 require $O(u)$ time. Thus, Algorithm \ref{Alg:index-basedQueryHigherD} requires $O(u+m)$ time in total.
\end{proof}

\section{Experiments}\label{sec:Experiments}
In this section, we present experimental studies evaluating our algorithms for computing eclipse points.

\subsection{Experiment Setup}
We implemented the following algorithms in Python and ran experiments on a machine with Intel Core i7 running Ubuntu with 8GB memory.
\vspace{-1em}

\begin{itemize}
\item \textbf{BASE:} Baseline algorithm (Section \ref{sec:transformation-based}).\vspace{-0.7em}

\item \textbf{TRAN:} Transformation-based Algorithm (Section \ref{sec:transformation-based}).\vspace{-0.7em}

\item \textbf{QUAD:} Index-based Algorithm with Line QuadTree (Section \ref{sec:index-based}).\vspace{-0.7em}

\item \textbf{CUTTING:} Index-based Algorithm with Cutting Tree (Section \ref{sec:index-based}).
\end{itemize}
\vspace{-1em}

We used both synthetic datasets and a real NBA dataset in our experiments. To study the scalability of our methods, we generated independent (INDE), correlated (CORR), and anti-correlated (ANTI) datasets following the seminal work \cite{DBLP:conf/icde/BorzsonyiKS01}. We also built a real dataset that contains 2384 NBA players. The data was extracted from http://stats.nba.com/leaders/alltime/?ls=iref:nba:gnav on 04/15/2015. Each player has five attributes that measure the player's performance. These attributes are Points (PTS), Rebounds (REB), Assists (AST), Steals (STL), and Blocks (BLK). The parameter settings are shown in Table \ref{tab:par}. For the sake of simplicity and w.l.o.g., we consider $r[1]=r[2]=...=r[d-1]$ in this paper.

\vspace{-1em}
\begin{table}[htb]\centering
\caption{Parameter settings (defaults are in bold).}\label{tab:par}
\vspace{-1em}
{%
\footnotesize
\begin{tabular}{|c|c|}
\hline
Param. & Settings\\
\hline
$n$ & $2^7$, $\mathbf{2^{10}}$, $2^{13}$, $2^{17}$, $2^{20}$\\
\hline
$d$ & $2$, $\textbf{3}$, $4$, $5$\\
\hline
$r[j]$ & $[0.18,5.67]$, $\mathbf{[0.36,2.75]}$, $[0.58,1.73]$, $[0.84,1.19]$\\
angle       & $[100,170]$, ~~$\mathbf{[110,160]}$, ~~$[120,150]$, ~$[130,140]$\\
\hline
\end{tabular}}
\end{table}%

\partitle{Cutting Tree Implementation}
Given a dataset of $n$ points in $d$ dimensional space, we map these $n$ points into $n$ hyperplanes in $d$ dimensional dual space correspondingly. For any two $d$ dimensional hyperplanes, they form a $d-1$ dimensional hyperplane by eliminating $x_d$. Therefore, we have $n \choose 2$ intersecting hyperplanes in $d-1$ dimensional space. For any $d-1$ dimensional hyperplanes in $d-1$ dimensional space, they form a $d-1$ dimensional point. Therefore, we have ${n\choose 2} \choose {d-1}$ intersecting points in $d-1$ dimensional space.  For these ${n\choose 2} \choose {d-1}$ intersecting points, we randomly choose $t^d$ points. We employ the classic Voronoi algorithm to compute the Voronoi hypercells for these $t^d$ points, i.e., the $d-1$ dimensional space is partitioned into $t^d$ regions. The intuition for this implementation is clear. For a subregion with more hyperplanes, there has more intersecting points. If we randomly sample points, then this region will have more points to be selected with high probability. With more points to be selected, this subregion can be partitioned into more subregions, which leads to the smaller number of hyperplanes intersecting with each subregion. We note that this implementation is better than the cutting tree implementation based on the arrangement structure even from the theoretical perspective because the time complexity for computing Voronoi diagram is $O(n^{d/2})$ \cite{DBLP:conf/focs/Chazelle91} while computing arrangement structure requires $O(n^d)$ time \cite{DBLP:journals/siamcomp/EdelsbrunnerOS86} in $d$ dimensional space.

\subsection{Case Study}\label{subsec:caseStudy}
We performed a user study using the hotel example (Figure \ref{fig:1nn}). We posted a questionnaire using the conference scenario to ask $38$ students and staff members in our department and $30$ workers from Amazon Mechanical Turk. We asked them to choose the best hotel reservation system. The hotel reservation systems include skyline system, top-$k$ system, eclipse-ratio system, e.g., $r[1]\in [0.3,0.5]$, eclipse-weight system, e.g., $w[1]\in [0.3,0.5]$ and $w[2]=1-w[1]$, and eclipse-category system, e.g., $w[1]$ is very important/important/similar/unimportant/very unimportant compared to $w[2]$, where each category corresponds to a range. We received $61$ responses in total.

Table \ref{tab:casestudy} shows the number of answers for each hotel systems. The results show that eclipse-category system attracts more attentions and our algorithms for computing eclipse points can be easily adapted for each of the eclipse systems.

\begin{table}[htb]\centering
\caption{Results of case study.}\label{tab:casestudy}
 \vspace{-1em}
{%
\begin{tabular}{|c|c|c|c|c|c|}
\hline
skyline & top-$k$ & eclipse-ratio & eclipse-weight & eclipse-category\\
\hline
13 & 7 & 8 & 8 & 25\\
\hline
\end{tabular}}
\end{table}%

\subsection{Average Number of Eclipse Points}\label{subsec:average}
In this subsection, we study the average number of eclipse points on the independent and identically distributed datasets, which can be used for designing attribute weight ratio vector. It is easy for a user to set the attribute weight vector, but it is hard for the user to estimate how many eclipse points will be returned. If we compute the expected number of eclipse points in advance, the user can adjust the attribute weight ratio vector according to the desired number of eclipse points.

The number of eclipse points for different number of points $n$, different number of dimensions $d$, and different attribute weight ratio vectors $\textbf{r}=\langle r[1],r[2],...,r[d-1]\rangle$ are shown in Tables \ref{tab:diffn}, \ref{tab:diffd}, and \ref{tab:diffAttri}, respectively. We can see that the number of points has very small impact on the number of eclipse points, but the number of dimensions and the attribute weight ratios have significant impact.

\begin{table}[htb]\centering
\caption{Expected number of eclipse points vs. $n$.}\label{tab:diffn}
 \vspace{-1em}
{%
\begin{tabular}{|c|c|c|c|c|c|c|c|}
\hline
$n$ & $2^7$ & $2^{10}$ & $2^{13}$ & $2^{17}$ & $2^{20}$\\
\hline
\# eclipse points & 3.71 & 3.83 & 3.91 & 4.03 & 4.13\\
\hline
\end{tabular}}
\end{table}%
\vspace{-1em}

\begin{table}[htb]\centering
\caption{Expected number of eclipse points vs. $d$.}\label{tab:diffd}
 \vspace{-1em}
{%
\begin{tabular}{|c|c|c|c|c|c|c|c|}
\hline
$d$ & $2$ & $3$ & $4$ & $5$\\
\hline
\# eclipse points & $1.8$ & $3.8$ & $8.5$ & $17.2$\\
\hline
\end{tabular}}
\end{table}%

\begin{table}[htb]\centering
\caption{Expected number of eclipse points vs. $r$.}\label{tab:diffAttri}
 \vspace{-1em}
{%
\begin{tabular}{|c|c|c|c|c|c|c|c|}
\hline
$r$ & [0.18,5.67] & [0.36,2.75] & [0.58,1.73] & [0.84,1.19]\\
\hline
\# ecl. pts & 7.2 & 3.8 & 2.2 & 1.3\\
\hline
\end{tabular}}
\end{table}%

\begin{figure*}[!htb]
\centering
\subfigure[\scriptsize{time cost of CORR}]{
\begin{minipage}[b]{0.22\textwidth}
\includegraphics[width=1.1\linewidth]{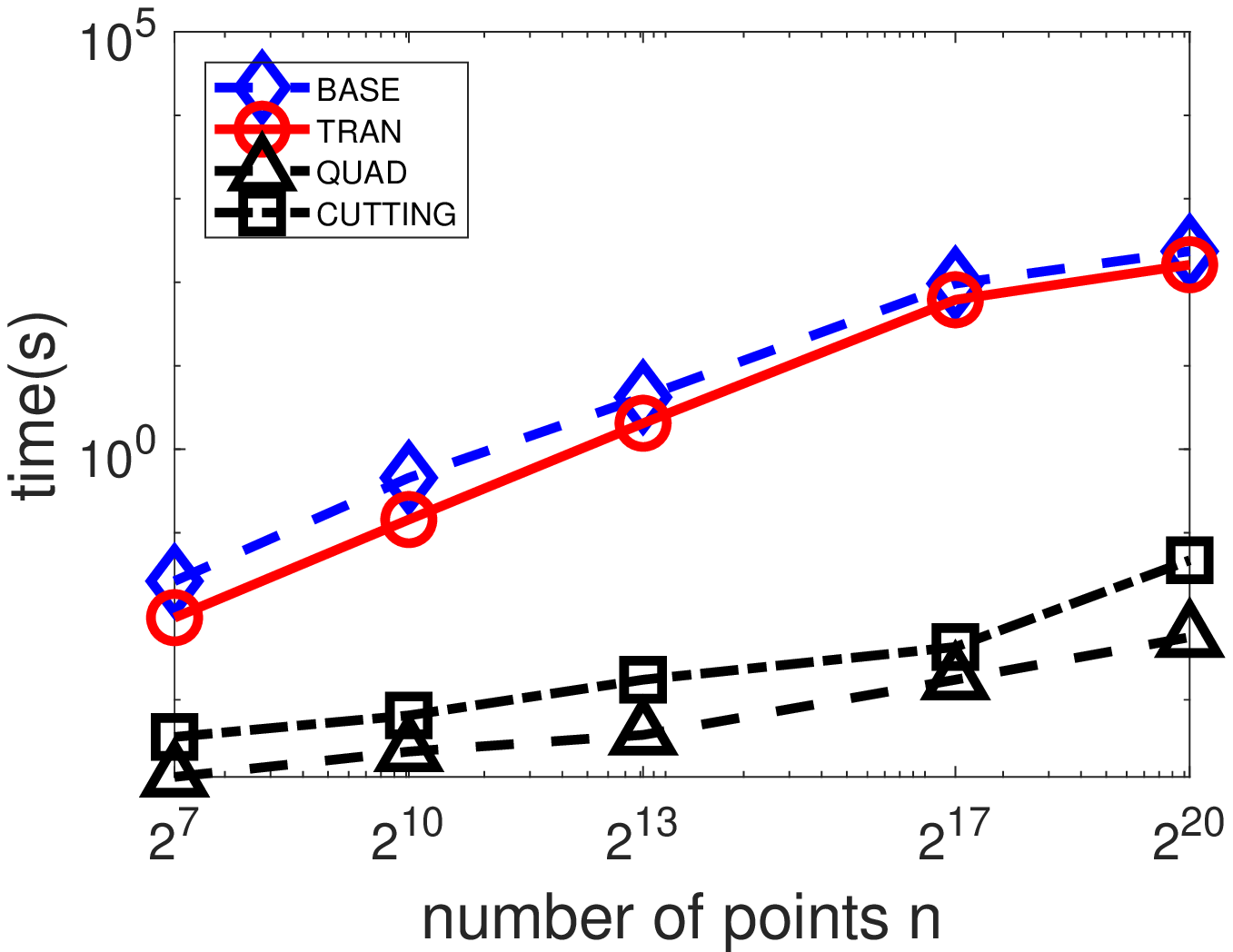}
\end{minipage}
}
\subfigure[\scriptsize{time cost of INDE}]{
\begin{minipage}[b]{0.22\textwidth}
\includegraphics[width=1.1\linewidth]{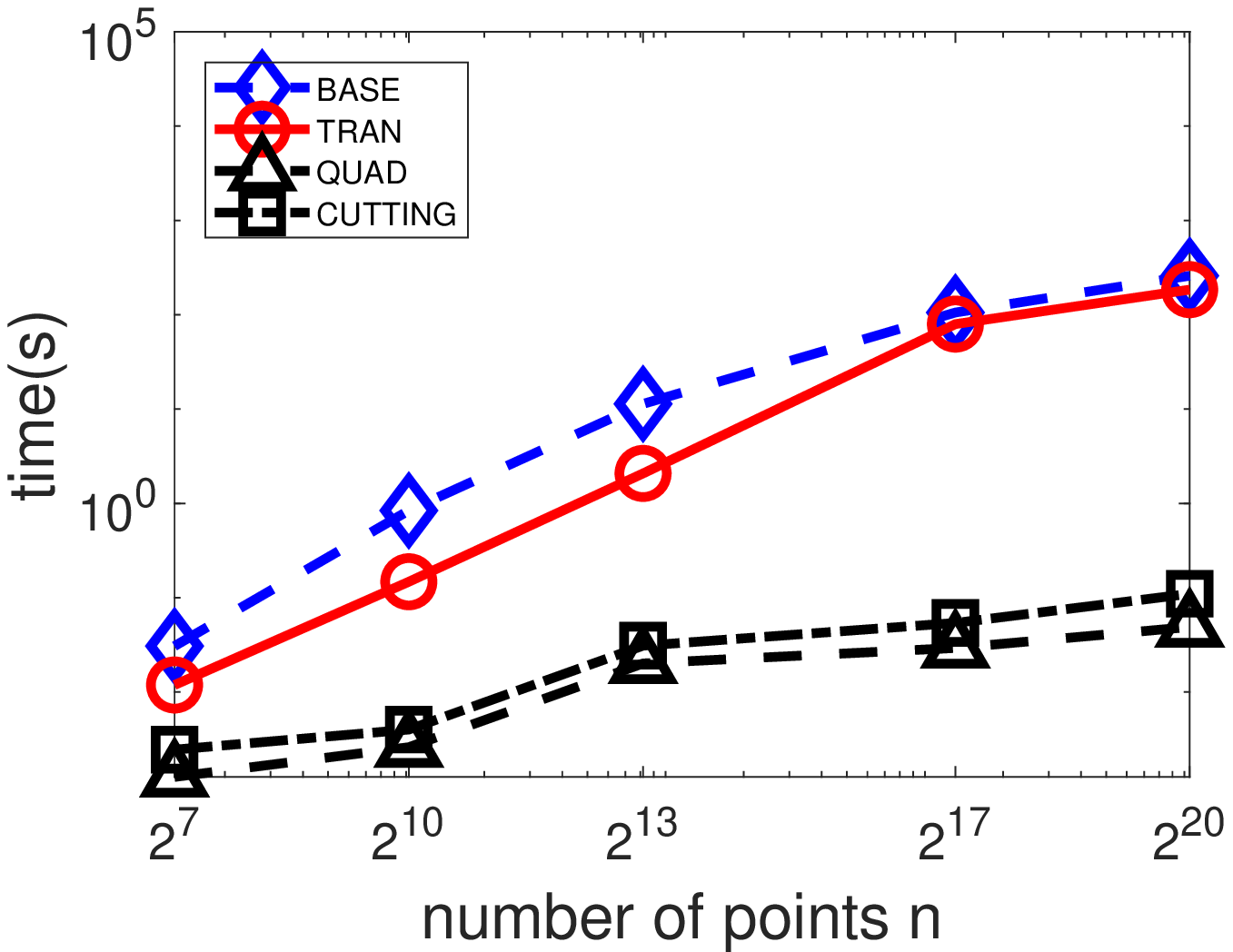}
\end{minipage}
}
\subfigure[\scriptsize{time cost of ANTI}]{
\begin{minipage}[b]{0.22\textwidth}
\includegraphics[width=1.1\linewidth]{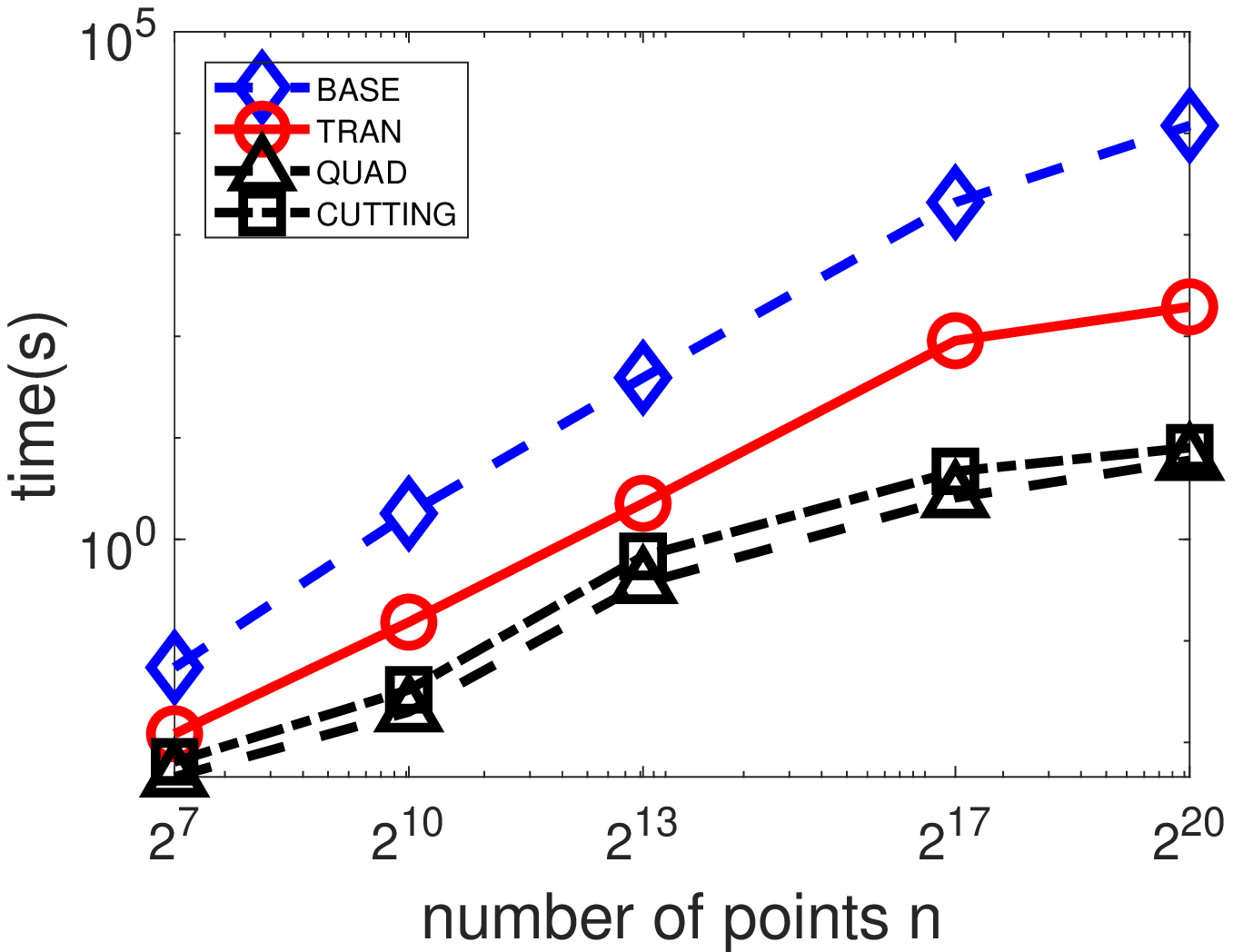}
\end{minipage}
}
\subfigure[\scriptsize{time cost of NBA}]{
\begin{minipage}[b]{0.22\textwidth}
\includegraphics[width=1.1\linewidth]{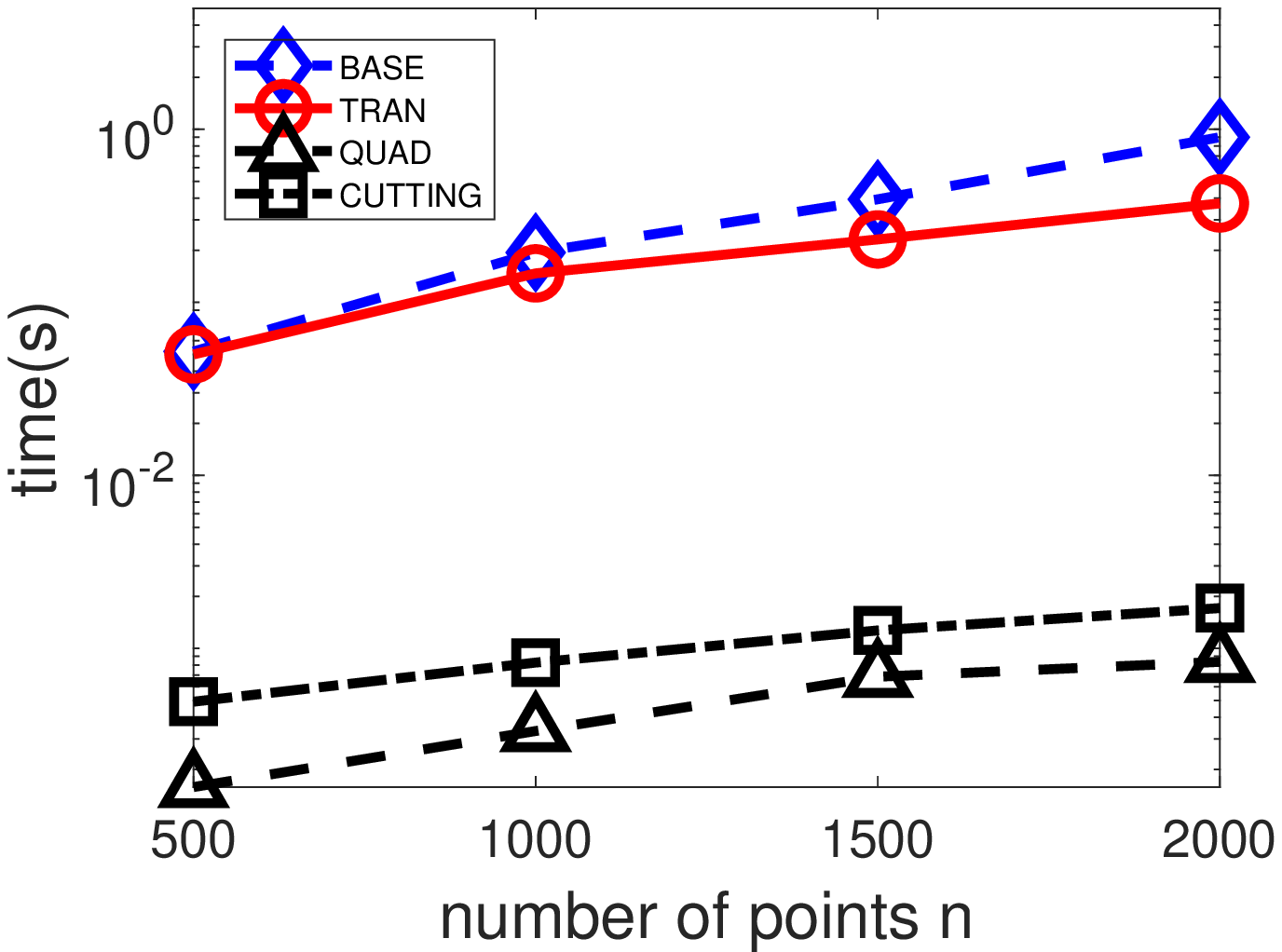}
\end{minipage}
}
 \vspace{-1em}\captionsetup{font={scriptsize}}\caption{The impact of $n$.} \label{fig:nAverage}
\end{figure*}

\begin{figure*}[!htb]
\centering
\subfigure[\scriptsize{time cost of CORR}]{
\begin{minipage}[b]{0.22\textwidth}
\includegraphics[width=1.1\linewidth]{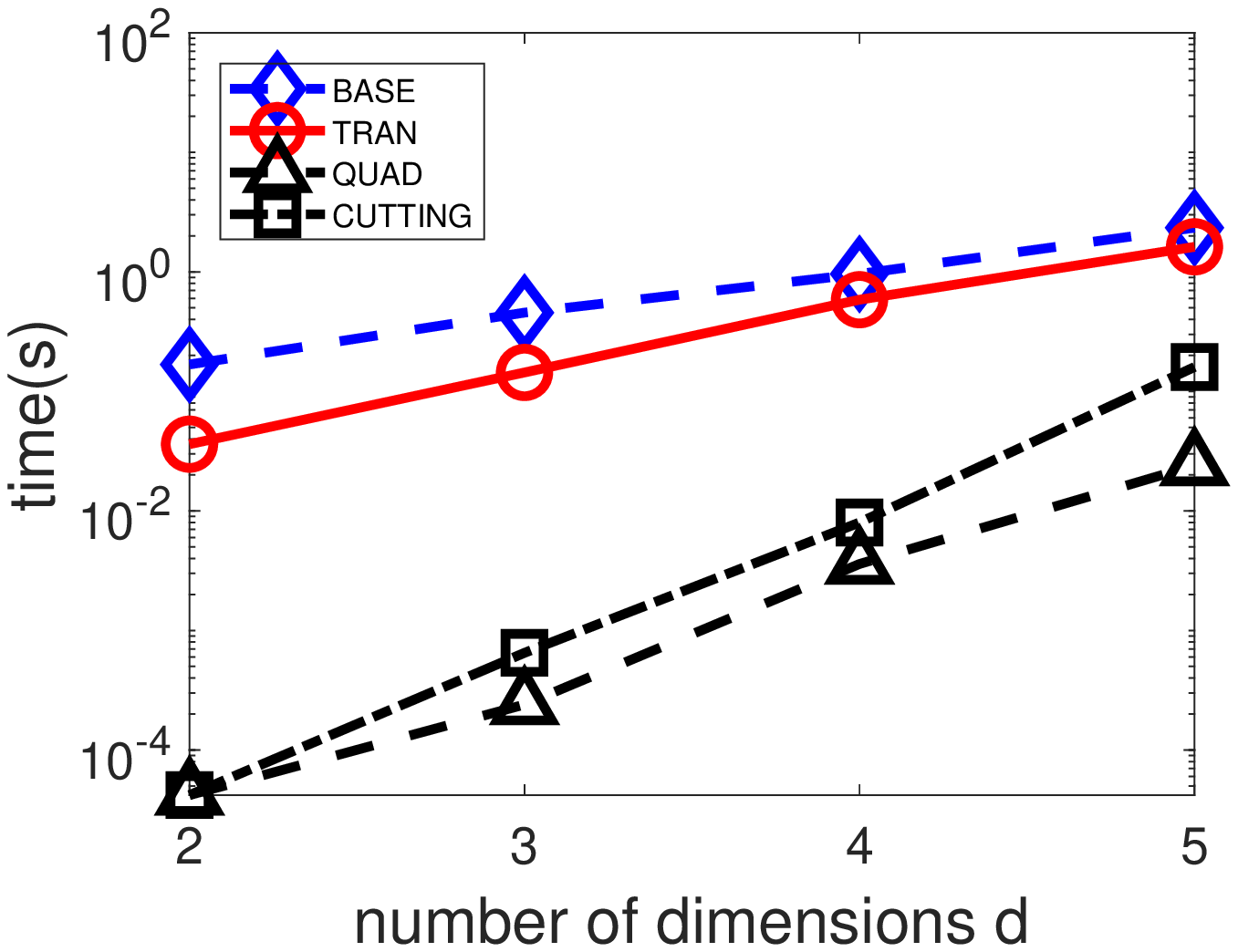}
\end{minipage}
}
\subfigure[\scriptsize{time cost of INDE}]{
\begin{minipage}[b]{0.22\textwidth}
\includegraphics[width=1.1\linewidth]{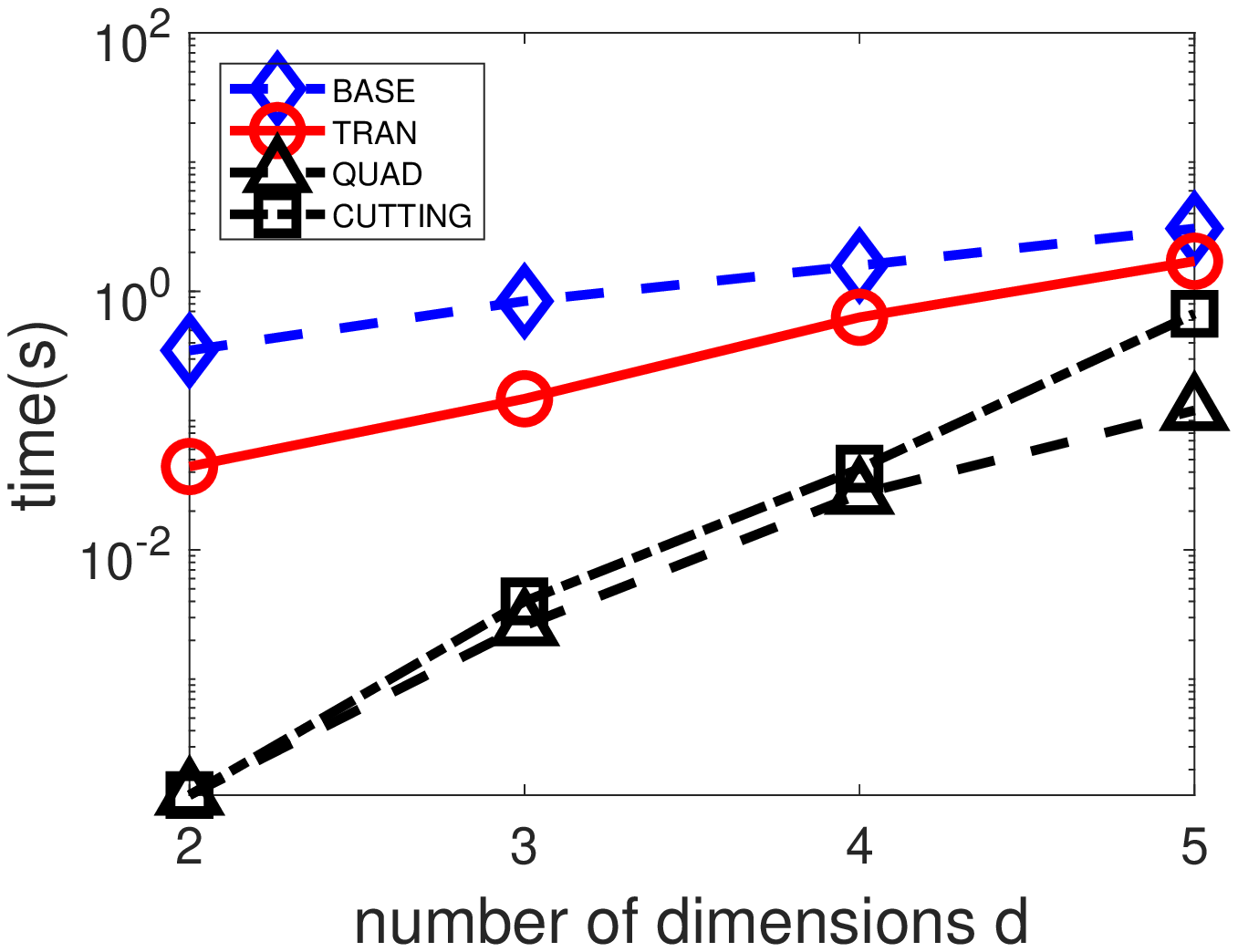}
\end{minipage}
}
\subfigure[\scriptsize{time cost of ANTI}]{
\begin{minipage}[b]{0.22\textwidth}
\includegraphics[width=1.1\linewidth]{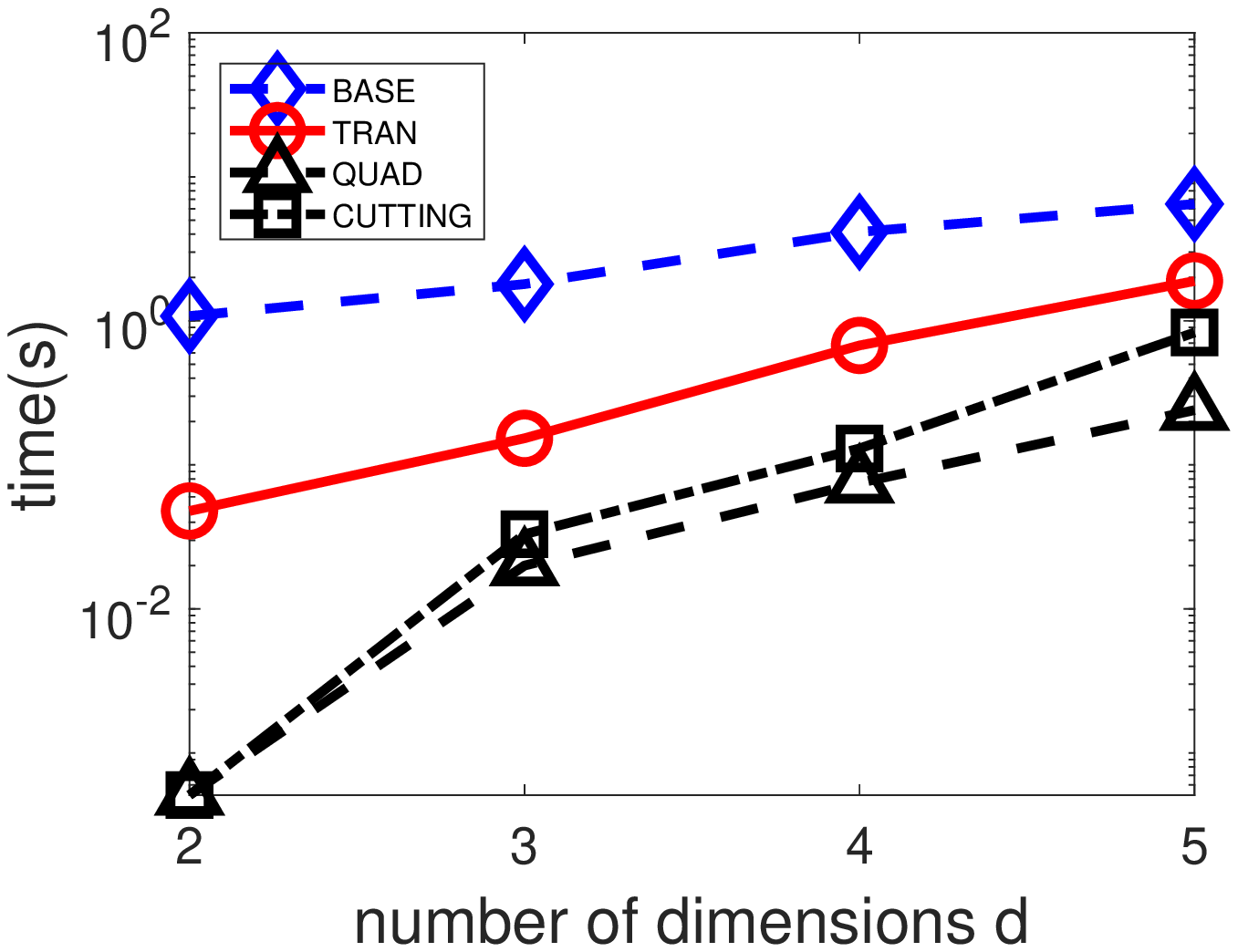}
\end{minipage}
}
\subfigure[\scriptsize{time cost of NBA}]{
\begin{minipage}[b]{0.22\textwidth}
\includegraphics[width=1.1\linewidth]{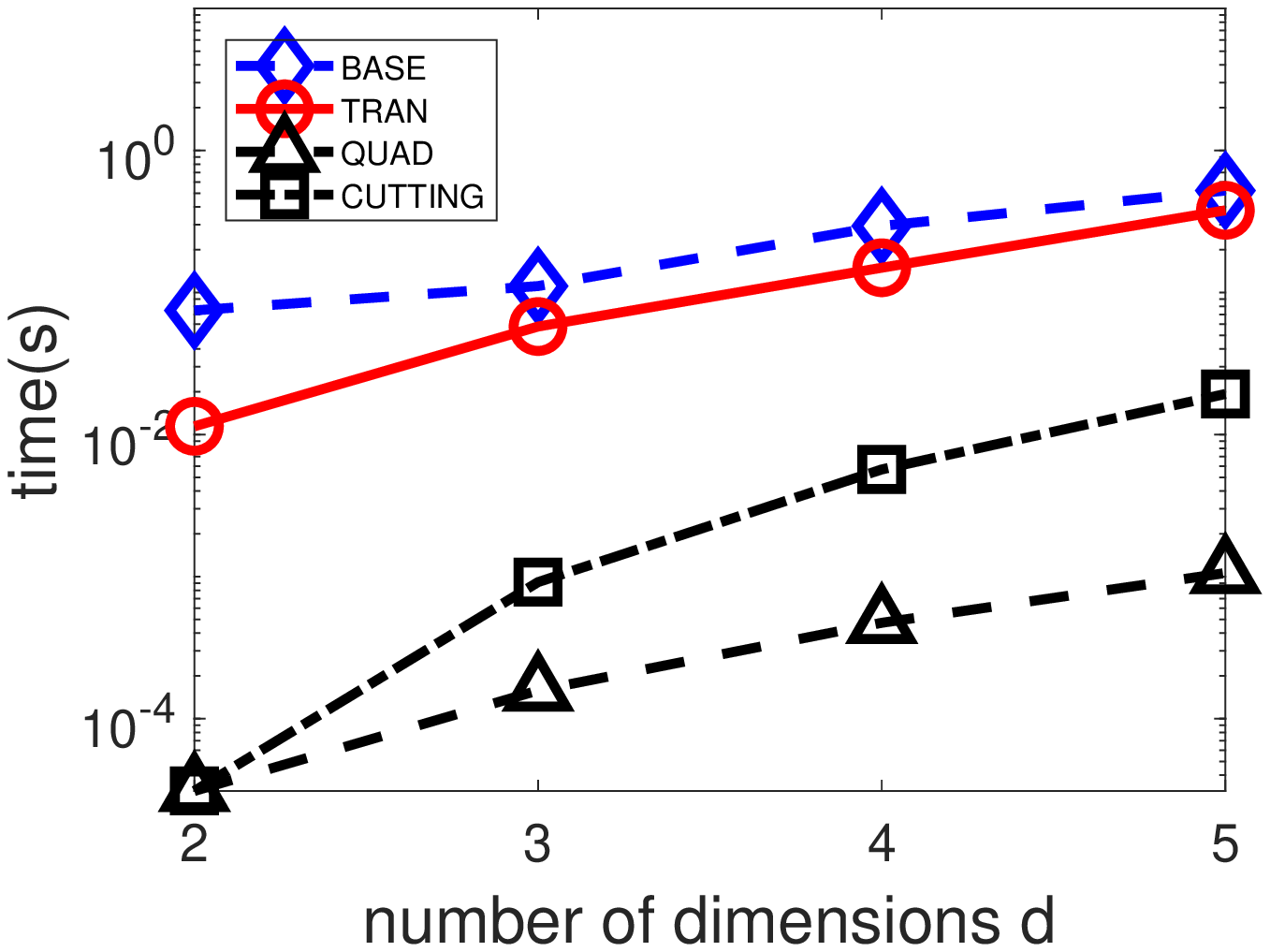}
\end{minipage}
}
 \vspace{-1em}\captionsetup{font={scriptsize}}\caption{The impact of $d$.} \label{fig:dAverage}
\end{figure*}

\begin{figure*}[!htb]
\centering
\subfigure[\scriptsize{time cost of CORR}]{
\begin{minipage}[b]{0.22\textwidth}
\includegraphics[width=1.1\linewidth]{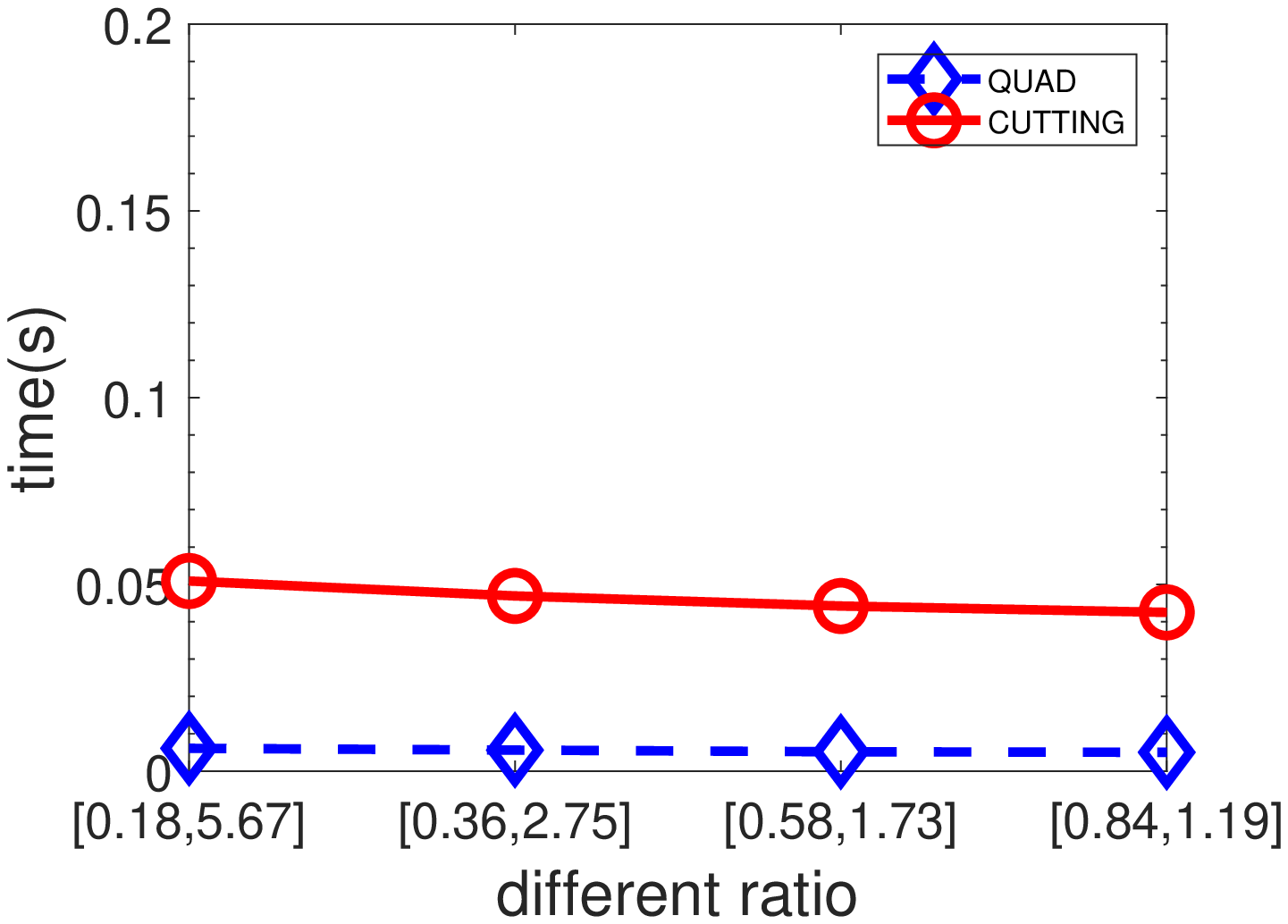}
\end{minipage}
}
\subfigure[\scriptsize{time cost of INDE}]{
\begin{minipage}[b]{0.22\textwidth}
\includegraphics[width=1.1\linewidth]{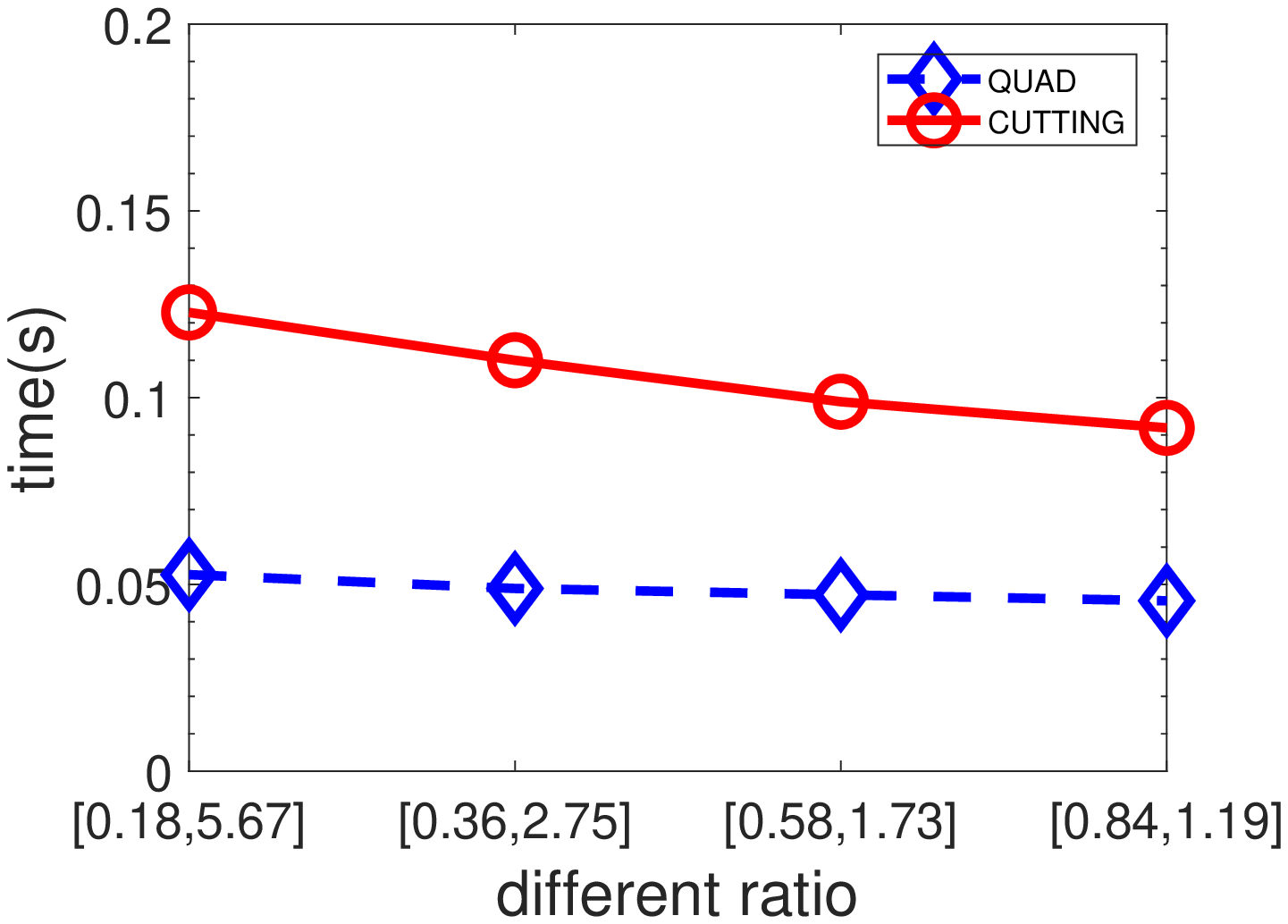}
\end{minipage}
}
\subfigure[\scriptsize{time cost of ANTI}]{
\begin{minipage}[b]{0.22\textwidth}
\includegraphics[width=1.1\linewidth]{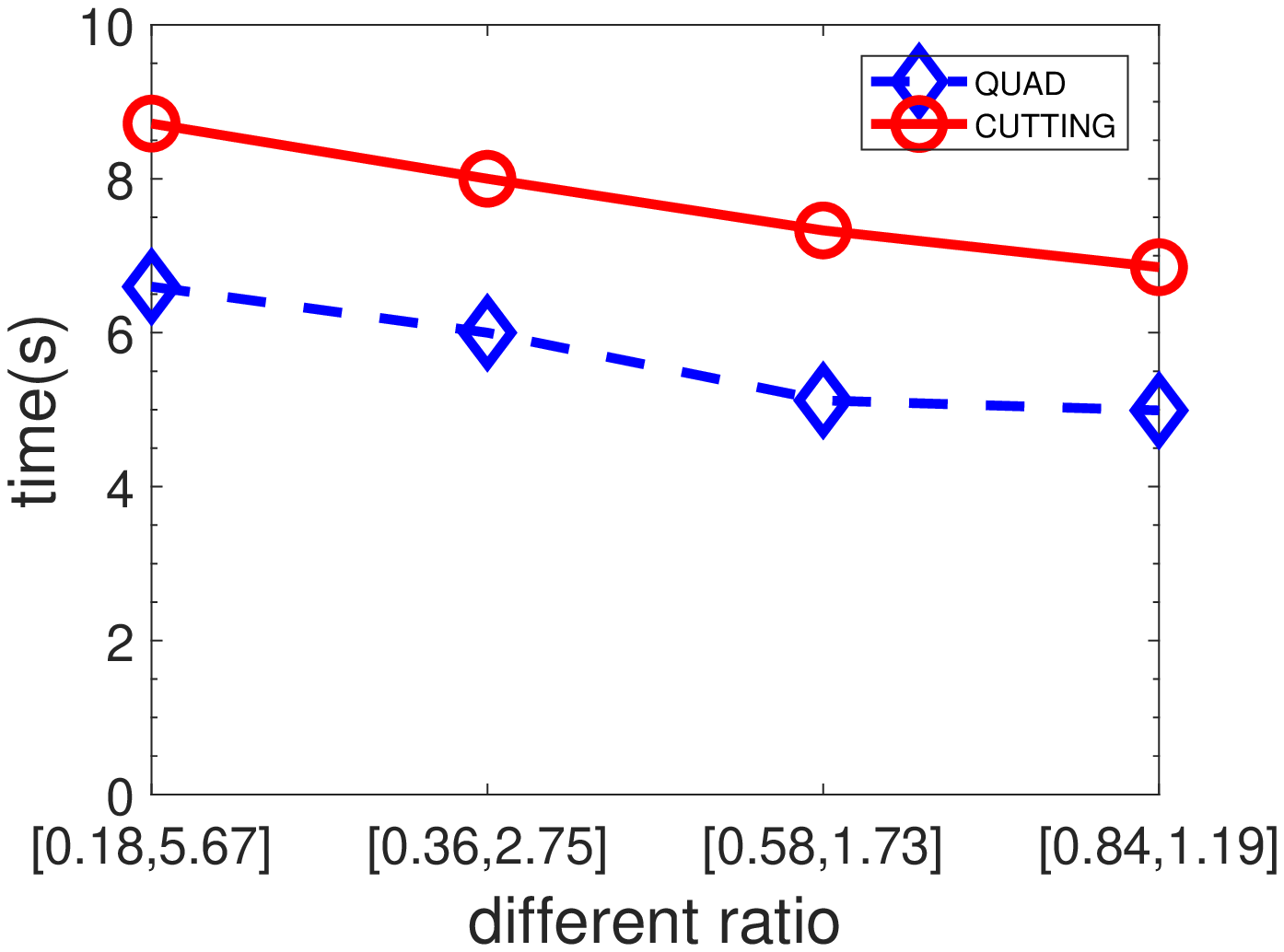}
\end{minipage}
}
\subfigure[\scriptsize{time cost of NBA}]{
\begin{minipage}[b]{0.22\textwidth}
\includegraphics[width=1.1\linewidth]{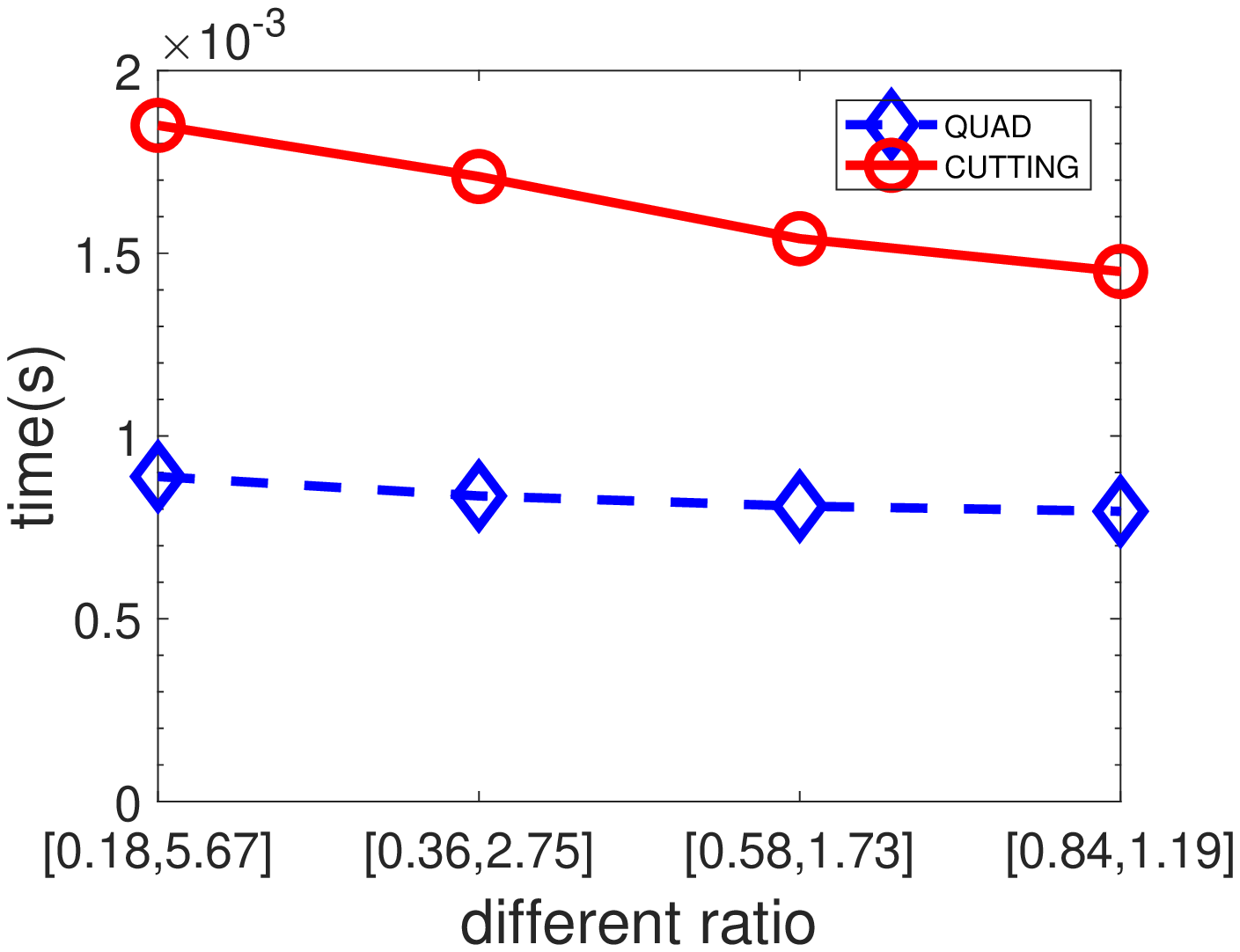}
\end{minipage}
}
 \vspace{-1em}\captionsetup{font={scriptsize}}\caption{The impact of ratio $r$.} \label{fig:rAverage}
\end{figure*}

\begin{figure}[t]
\begin{minipage}[t]{0.22\textwidth}
\centering
\includegraphics[width=1.1\linewidth]{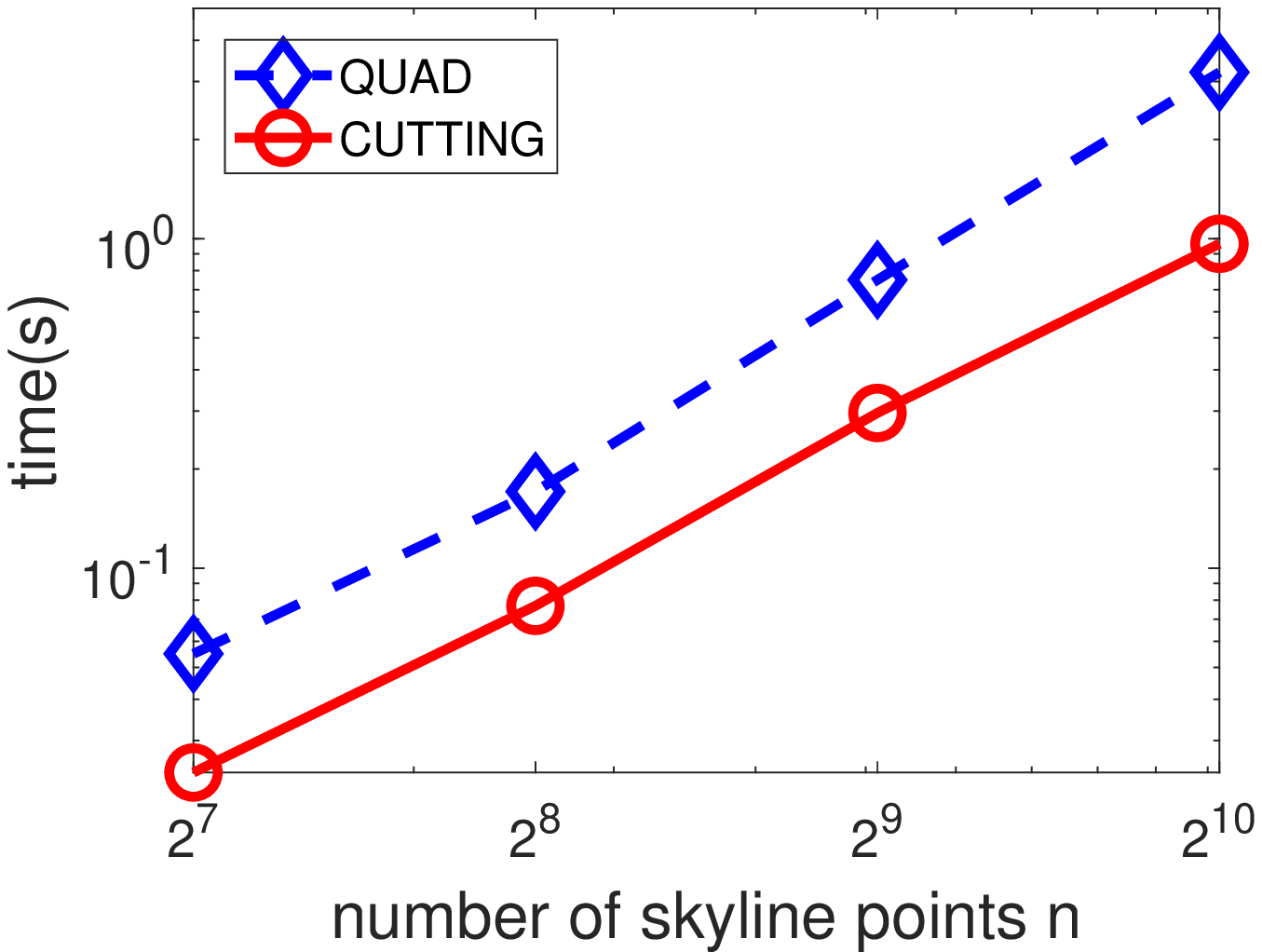}
\captionsetup{font={scriptsize}}
\caption{Worst case ($d=3$).} \label{fig:worstN}
\end{minipage}%
\begin{minipage}[t]{0.22\textwidth}
\centering
\includegraphics[width=1.1\linewidth]{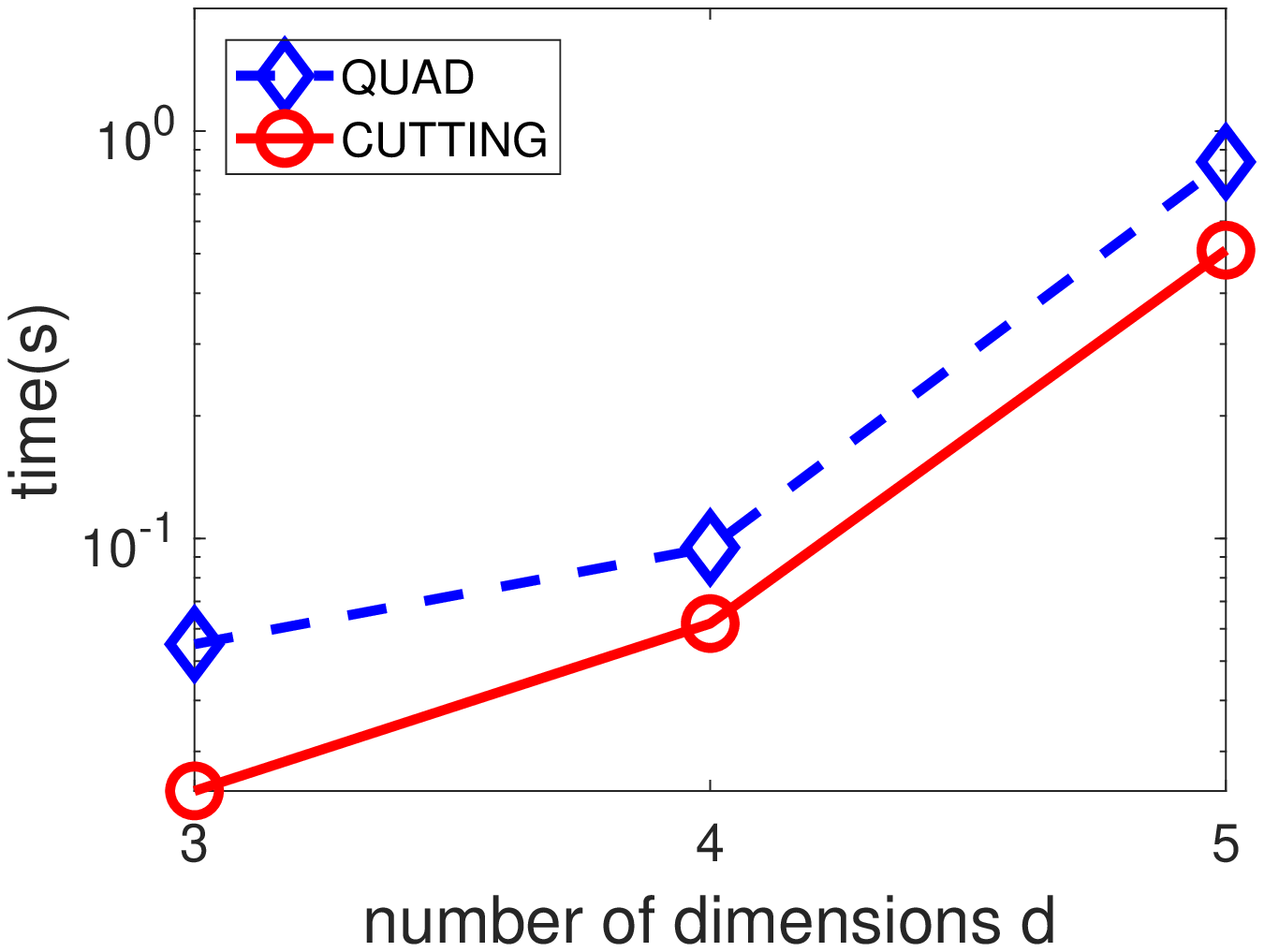}
\captionsetup{font={scriptsize}}
\caption{Worst case ($n=2^7$).} \label{fig:worstD}
\end{minipage}
\end{figure}

\subsection{Performances in the Average Case}
In this subsection, we report the experimental results for our proposed four algorithms in the average case.

\partitle{The impact of $n$}
Figures \ref{fig:nAverage}(a)(b)(c)(d) present the time cost of BASE, TRAN, QUAD, and CUTTING with varying number of points $n$ for the three synthetic datasets and the real NBA dataset ($d=3, r[j]\in [0.36,2.75]$). For the baseline algorithm BASE and the transformation-based TRAN, TRAN is significantly faster than BASE, especially on the ANTI dataset. The reason is that BASE is very sensitive to the number of eclipse points and there are more eclipse points for the ANTI dataset. For the index-based algorithms QUAD and CUTTING, QUAD outperforms CUTTING because the index structure of QUAD is much simpler and it is easier to find the intersecting hyperplanes. Comparing all the different algorithms, the index-based algorithms significantly outperform BASE and TRAN, which validates the benefit of our index structures, especially for large n. From the viewpoint of different datasets, the time cost is in increasing order for CORR, INDE, and ANTI, due to the increasing number of eclipse points. 

\partitle{The impact of $d$}
Figures \ref{fig:dAverage}(a)(b)(c)(d) present the time cost of  BASE, TRAN, QUAD, and CUTTING with varying number of dimensions $d$ for the three synthetic datasets ($n=2^{10}, r[j]\in [0.36,2.75]$) and the real NBA dataset ($n=1000, r[j]\in [0.36,2.75]$). For the non-index-based algorithms BASE and TRAN, TRAN is significantly faster than BASE. For the index-based algorithms QUAD and CUTTING, QUAD significantly outperforms CUTTING, especially for large $d$. The reason is that it is time-consuming to find the intersecting Voronoi hypercells because the number of the vertexes of each Voronoi hypercell in high dimensional space is too high, while it is very easy to find the intersecting subquadrants in QUAD. Comparing all the different algorithms, QUAD and CUTTING significantly outperform BASE and TRAN again, which validates the benefit of our index structures. Because QUAD and CUTTING employ the same binary search tree structure in two dimensional space, they have the same performance. 

\partitle{The impact of ratio $r$}
Figures \ref{fig:rAverage}(a)(b)(c)(d) present the time cost of QUAD and CUTTING with varying attribute weight ratio vectors for the three synthetic datasets ($n=2^{10}, d=3$) and the real NBA dataset ($n=1000, d=3$). Because the attribute weight ratio vector has no impact on the transformation-based algorithms, we did not report the time cost for them. For the index-based algorithms QUAD and CUTTING, QUAD significantly outperforms CUTTING again. We can observe that the larger ratio range, the higher time cost for both index-based algorithms. The reason is that we need to search for more space for the larger ratio range, which means that we need to compute more intersections.

\subsection{Performances in the Worst Case}
QUAD has a good performance in the average case, however, QUAD is very sensitive to the dataset distribution because line quadtree can be quite unbalanced if all the lines almost lie in the same quadrant in each layer. We did the experiment with the scenario where all the lines almost lie in the same quadrant. Because the index-based algorithms significantly outperform the transformation-based algorithms, in this subsection, we only report the experimental results for the index-based algorithms in the worst case.

Figure \ref{fig:worstN} presents the worst case time cost of QUAD and CUTTING on different number of points $n$. It is easy to see that CUTTING always outperforms QUAD because the depth for the line quadtree index structure is $O(n)$ in the worst case, i.e., we need to scan all the lines.

Figure \ref{fig:worstD} presents the worst case time cost of QUAD and CUTTING on different number of dimensions $d$. It is easy to see that CUTTING outperforms QUAD again, however, with the increasing of the number of dimensions $d$, the difference becomes small because there are too many vertexes for each Voronoi hypercell in high dimensional space.


\section{Related Work}\label{sec:Related}

The $k$NN query is the most well-known similarity query. Furthermore, $k$NN has been extensively used in the machine learning field, such as $k$NN classification \cite{DBLP:journals/tit/CoverH67} and $k$NN regression \cite{altman1992introduction}. The most disadvantage of $k$NN query is that it is hard to set the exact and appropriate attribute weight vector.

Skyline is a fundamental problem in computational geometry because skyline is an interesting characterization of the boundary of a set of points. Since the introduction of the skyline operator by Borzsonyi et al. \cite{DBLP:conf/icde/BorzsonyiKS01}, Skyline has been extensively studied in database. \cite{DBLP:journals/ipl/LiuXX14} presented state-of-the-art algorithm for computing skyline from the theoretical aspect. To facilitate the skyline query, skyline diagram was defined in \cite{DBLP:conf/icde/LiuY0PL18}. To protect data privacy and query privacy, secure skyline query was studied in \cite{DBLP:conf/icde/LiuY0P17}. \cite{DBLP:journals/tkde/CuiCXLSX09} studied the skyline in P2P systems. \cite{DBLP:conf/vldb/PeiJLY07, DBLP:conf/cikm/LiuZXLL15, DBLP:journals/www/ZhangLCZC15} studied the skyline on the uncertain dataset. \cite{DBLP:conf/ssdbm/LuZH10} studied the continuous skyline over distributed data streams. \cite{DBLP:journals/pvldb/LiuXPLZ15, DBLP:conf/cikm/YuQL0CZ17} generalized the original skyline definition for individual points to permutation group-based skyline for groups. \cite{ehrgott2005multicriteria} detailedly discussed the multicriteria optimization problem subjected to a weighted sum range. \cite{zitzler2002performance} presented the quantitative comparison of the performance of different approximate algorithms for skyline.

The drawback of skyline is that the number of skyline points can be prohibitively high. There are lots of existing works trying to alleviate this problem \cite{DBLP:conf/icde/LinYZZ07, DBLP:conf/icde/TaoDLP09, DBLP:conf/icde/SarmaLNLX11, DBLP:journals/vldb/MagnaniAM14, DBLP:conf/edbt/SoholmCA16, DBLP:journals/tkde/BaiXWZZYW16, DBLP:journals/tkde/LuJZ11, DBLP:journals/vldb/ZhangLOT10}. Lin et al. \cite{DBLP:conf/icde/LinYZZ07} studied the problem of selecting $k$ skyline points to maximize the number of points dominated by at least one of these $k$ skyline points. Tao et al. \cite{DBLP:conf/icde/TaoDLP09} proposed a new definition of representative skyline that minimizes the distance between a non-representative skyline point and its nearest representative. Sarma et al. \cite{DBLP:conf/icde/SarmaLNLX11} formulated the problem of displaying $k$ representative skyline points such that the probability that a random user would click on one of them is maximized. Magnani et al. \cite{DBLP:journals/vldb/MagnaniAM14} proposed a new representative skyline definition which is not sensitive to rescaling or insertion of non-skyline points. Soholm et al. \cite{DBLP:conf/edbt/SoholmCA16} defined the maximum coverage representative skyline which maximizes the dominated data space of $k$ points. Lu et al. \cite{DBLP:journals/tkde/LuJZ11} proposed the top-$k$ representative skyline based on skyline layers. Zhang et al.\cite{DBLP:journals/vldb/ZhangLOT10} showed a cone dominance definition which can control the size of returned points. All those works are focused on static data, while Bai et al. \cite{DBLP:journals/tkde/BaiXWZZYW16} studied the representative skyline definition over data streams.

Another related direction to our work is dynamic preferences \cite{DBLP:journals/tkde/WongPFW09, DBLP:journals/pvldb/WongFPHWL08, DBLP:conf/kdd/WongFPW07}. Taking the weather as an example, we only consider sunny and raining here. We may prefer sunny for hiking but we may prefer raining for sleeping because raining makes you feel more comfortable. Mindolin et al. \cite{DBLP:journals/pvldb/MindolinC09, DBLP:journals/vldb/MindolinC11} proposed a framework ``p-skyline'' which incorporates relative attribute importance in skyline allows for reduction in the corresponding query result size.

In this paper, we formally generalize the $1$NN and skyline queries with eclipse using the notion of dominance. The most related to our work is \cite{DBLP:journals/pvldb/CiacciaM17}. They defined a query similar to the eclipse query we initially presented in \cite{DBLP:journals/corr/Liu0ZL17} without studying the formal properties with respect to the domination. In addition, they only presented one algorithm for which we formally proved the correctness and used as a baseline algorithm.  We then presented significantly more efficient transformation-based algorithms and new index-based eclipse query algorithms for computing eclipse points.


\section{Conclusion}\label{sec:Conclusion}
In this paper, we proposed a novel eclipse definition which provides a more flexible and customizable definition for the classic $1$NN and skyline. We first illustrated a baseline $O(n^22^{d-1})$ algorithm to compute eclipse points, and then presented an efficient $O(n\log^{d-1} n)$ algorithm by transforming the eclipse problem to the skyline problem. For different users with different attribute weight vectors, we showed how to process the eclipse query based on the index structures and duality transform in $O(u+m)$ time. A comprehensive experimental study is reported demonstrating the effectiveness and efficiency of our eclipse algorithms.



{\tiny
\bibliographystyle{abbrv}\tiny
\bibliography{Eclipse}
}
\end{document}